\pgfplotsset{width=10cm,compat=newest}
\pgfplotsset{width=10cm,compat=1.9}
\def\eps{\varepsilon}
\newcommand{\floor}[1]{\left \lfloor{#1}\right \rfloor }
\definecolor{ForestGreen}{rgb}{.13,.54,.13}
\definecolor{BrickRed}{rgb}{.80,.26,.33}
\newtheorem{theorem}{Theorem}
\newtheorem{lemma}[theorem]{Lemma}
\newtheorem{proposition}{Proposition}
\newtheorem{definition}{Definition}
\newtheorem{property}{Property}
\newtheorem{example}{Example}
\renewcommand{\P}{\mathbb{P}}
\newcommand{\N}{\mathbb{N}}
\tikzset{>=latex} % for LaTeX arrow head
\colorlet{mydarkblue}{blue!30!black}
\pgfplotsset{compat=1.12} % TikZ coordinates <-> axes coordinates
\def\N{50}
\definecolor{darkgreen}{rgb}{0,0.6,0}
\newcommand{\kibitz}[2]{\ifnum\Comments=1{\textcolor{#1}{{#2}}}\fi}
\newcommand{\rmr}[1]{\kibitz{red}{[RM:#1]}}
\newcommand{\gan}[1]{\kibitz{blue}{[GG:#1]}}
	    \newcommand{\red}[1]{{\leavevmode\color{red}{#1}}}
\newcommand\todo[1]{{\red{TODO: {#1}}}}
\newcommand\toref{\red{[REF]}}
\newcommand{\ppm}{p}
\newcommand{\newpar}[1]{\vspace{-1mm}\paragraph{#1}}
\def\calD{\mathcal{D}}
\def\calN{\mathcal{N}}
\newcommand{\ind}[1]{\llbracket #1 \rrbracket}
\def\ol{\overline}
\def\ul{\underline}
\newcounter{module}
\tikzset{
  modulematrix/.style={draw=blue!50!red,rounded corners,matrix of nodes,row sep=1cm,column sep=1cm,nodes={draw=green!70,align=center,font=\sffamily},inner ysep=0.2cm},
  module/.style={rounded corners, align=center, font=\sffamily, thick},
  simple module/.style={module, top color=blue!10, bottom color=blue!35, draw=blue!75, text width=60mm, minimum height=5mm},
  module down arrow/.style={module arrow, shape border rotate=-90},
  module up arrow/.style={module arrow, shape border rotate=90},
  module right arrow/.style={module arrow},
    module left arrow/.style={module arrow, shape border rotate=-180},
module arrow/.style={single arrow, single arrow head extend=2.5mm, draw=gray!75, inner color=gray!20, outer color=gray!35, thick, shape border uses incircle, anchor=tail,minimum height=0.4cm},
}
 \definecolor{DarkBlue}{rgb}{0.1,0.1,0.5}
\definecolor{DarkGreen}{rgb}{0.1,0.5,0.1}
\newcommand{\printfnsymbol}[1]{%
  \textsuperscript{\@fnsymbol{#1}}%
}
\author{Ganesh Ghalme\printfnsymbol{1}}
\author{Reshef Meir\printfnsymbol{2}}
\affil{\printfnsymbol{1}Indian Institute of Technology  Hyderabad, India}
\affil{\printfnsymbol{2}Technion-Israel Institute of Technology, Haifa, Israel}
\begin{document}
 \title{Condorcet's Jury Theorem with    Abstention}
 \maketitle 
\begin{abstract}
The well-known Condorcet's Jury theorem posits that  the majority rule selects the best alternative among two available options with probability one, as the population size increases to infinity. We study this result under an asymmetric two-candidate  setup, where supporters of both candidates may have different participation costs.  

When the decision to abstain is fully rational i.e., when the  vote pivotality is the probability of a tie,  the only equilibrium outcome is a trivial equilibrium where all voters except those with zero voting cost, abstain. We propose and analyze a more practical, boundedly rational model  where voters overestimate their pivotality, and show that under this model, non-trivial equilibria emerge where the winning probability of both candidates is bounded away from one. 

We show that when the pivotality estimate strongly depends  on the  margin of victory, victory is not assured to any candidate in any non-trivial equilibrium, regardless of population size and in contrast to Condorcet's assertion. Whereas, under a weak depedence on margin, Condorcet's Jury theorem is restored. 
\end{abstract}

\section{Introduction}
\gan{background: Condorcet's result.. }
Consider a population of $N$ voters voting over two alternatives  $A$ and $B$,  with $A$ being the {\em better alternative}  according to some pre-defined criterion. Consider further that the preference of each individual voter is determined independently by an outcome of a coin toss with bias $p > 0.5$ in favour of the better alternative. That is, each individual voter supports alternative $A$ with probability $p$ and $B$  with the probability $1-p$. Under this setting, the famous Condorcet's Jury Theorem states that the majority rule selects candidate $A$ with probability tending to one when population size increases to infinity. 

%Consider a population voting over two alternatives or candidates $A$ and $B$, with $A$ being the `better' alternative according to some objective. If every individual has a probability $p>0.5$ of favouring the better candidate, the famous Condorcet's Jury theorem entails that the majority outcome will select $A$ with probability that tends to 1 as the population grows. 

\gan{gap/problem: condorcet result assumes everyone votes...this is not the case in practice...presidential elections have around 60\% turnout}

An implicit assumption in Condorcet's  theorem is that \emph{everyone votes}, or at least that the decision  to vote 
  does not depend on one's preference over alternatives. 
In contrast, in many practical situations such as  political elections, or a local or national referendum, abstention is found to be a  common and prominent phenomenon. For instance, the voter turnout in United States presidential elections has been around 52\%-62\% over the past 90 years \cite{wiki:Voter_turnout}. Abstention is also observed to be a significant phenomenon  in small-scale lab experiment~\cite{Blais,Owen1984}.

%\gan{connect Condorcet's result and  the purpose of this paper }

% In a real world democratic elections, this is rarely true. In fact, a sizeable fraction of voter population  \emph{choose} to abstain from voting \cite{riker68, Blais}. The study of why people choose to abstain and which specific fraction  is more likely to abstain is one of the most important question in the study of large elections.

\gan{ paradox of voting..why people vote...do those who abtain affect the outcome of the election ... what happens when the size of population grows to infinity }
 From a rational, economic point of view, the surprise is not that some voters abstain, but that they vote at all a.k.a. `the paradox of voting'. As Anthony Downs claimed already in 1957,  a rational voter weighs the benefit of voting (which realizes only if the voter is pivotal) against the cost. When  the  size of the electorate is large, the expected benefit derived from affecting the outcome of the election (i.e. being pivotal) is too small to induce voting  from a significant fraction of voters~\cite{Downs57} giving rise to the paradox of voting. %This induces  the unique  trivial equilibrium where only the zero cost\footnote{When the population size is large but finite, only voters with arbitrarily small voting cost vote. } voters vote.  

\gan{paradox of voting vs condorcets result... reality is between... does there exists a boundedly rational voting model that explains large turnout and at the same time does Condorcets result holds..strategic abstention }
While Condorcet's result holds under the assumption that everyone votes,  Down's theory of rational voting suggests that only near-zero cost voters vote in large-scale elections. %However, the  data from large-scale elections (US presidential elections, for instance)  suggests that  a significant fraction of voters vote,    contradicting the prediction by the  rational voting model in practice. 
%This paper evaluates Condorcet's result in a two-candidate election setting where voters make a strategic, yet bounded-rational choice whether to vote or abstain.  %Our heuristics-based abstention model predicts the constant fraction of voter turnout and shows that Condorcet's result does not always hold even when the population size increases to infinity.    %One of the most important properties  of the proposed abstention model  is  that the voting population is a significant, non-vanishing fraction of the overall population in every induced equilibrium. 
Our goal in this paper is to understand the equilibria that arise from a plausible, heuristic-based abstention model, in an attempt to reconcile the theoretical predictions with the moderate turnout rates we see in practice. The most important questions we focus on are: (1) how many equilibrium points are induced by the abstention  model and where are they located? (2) does the winning probability of the better candidate approach 1 as predicted by Condorcet's Jury theorem  in any of these equilibria?  and (3) what happens to the equilibrium points as population size increase? %and 4) does multi-round voting ensures that a better alternative is chosen with higher probability over single round voting? 

 We now present some important models for voter turnout presented in the literature and then move on to our proposed voter turnout model that depends on heuristic-based perceived pivotality of individual votes.

% This theory, however,  does not explain  large turnout observed in practice, leading to the famous \emph{paradox of voting} (a.k.a. Down's paradox). 

%\subsection{Theoretical Models of Voter Turnout}
 %We consider  self-interested voters who heuristically estimate the importance of individual votes (perceived pivotality) based on two factors;  {\em number of voters} and {\em margin of victory}.

\label{sec:intro}
\gan{review of theoretical voting models}
  %An election is called pivotal if  votes for winning candidate    exceed the votes for losing candidate  by at-most  one. Hence, the election outcome changes  if any of the voter is to cast a vote differently. 
  %  Voter's assessment of how likely her vote is going to be a pivotal vote is one of the important factors that incentivizes her to vote \tocite{}. 
  
  \newpar{\bf The Calculus of Voting model:}  
  Originally proposed by \cite{Downs57} and later developed by 
 \cite{riker68}, this  model   attributes each  voter's decision to abstain from voting to expected cost-benefit analysis. Let $p_i$ denotes the perceived \emph{pivotality} of voter~$i$, whereas $\texttt{V}_i$ denotes the  personal benefit she receives if her preferred candidate wins an election,   $\texttt{D}_i$ denotes the social benefit she receives by performing a civic duty of voting and $\texttt{G}_i$ denotes   costs of voting she incurs. These costs  include the cost of obtaining and processing information and the actual cost of  registering and going to polls (see also~\cite{Aldrich}  for discussion of voting and rational choice).  A voter $i$ votes if and only if %\rmr{we use B for voter type so perhaps use V here (for value)}
 \begin{equation}
     p_i\cdot \texttt{V}_i + \texttt{D}_i \geq \texttt{G}_i. 
     \label{eq:calculus}
 \end{equation} 

 The calculus of voting  model considers  $p_i$ to be the probability that $i$'s vote would change the outcome a.k.a. probability of an event that all voters except $i$ reach a tie. The tie probabilities are derived from the aggregated stochastic votes, and thus the pivot  computation and subsequent equilibrium analysis quickly become intractable as the number of individual voters increases. %We call the calculus of voting model  a  fully rational model. %Apart from computational tractability issues,  another significant problem with fully rational model is that it fails to model voter turnout  in real-world elections accurately. 
 
 %The next followup extensions we describe relax the fully rational  model by replacing the exact calculation of  pivot probability  with approximate values that are only correct in the limit.

\rmr{Aldrich: there is a paradox in empirical data, where studies based on aggregate data show a strong connection of $p$ and turnout, and studies based on survey data show the opposite. I'm not sure I understand the explanation, or the difference between these studies :(   It may suggest some evidence that voters mis-estimate their pivotality}

 Myerson and Weber studied voting equilibria in  a 3 or more candidate elections by fixing a distribution over preferences and considering a large population sampled from this distribution, so that candidates' scores are multinomial variables~\cite{MyersonWeber}. While tie probabilities vanish in the limit, they observe that  the \emph{relative} tie probabilities for each pair of candidates can still be compared and thus voters' best responses are well defined. %An equilibrium in the Myerson and Weber model is  essentially a tuple of actions (votes), expected candidates' scores, and pairwise tie probabilities that justify one another. 
 In a more recent work, Myerson~\cite{MYERSON2002219} suggested another relaxation by approximating candidates' scores with Poisson distributions.

 Crucially, in the above models, the tie probabilities used for voters' strategic calculations are derived directly from the vote distribution, either exactly or approximately.
    When  the size of the electorate is large, the tie probability  %\footnote{Pivot probability is the probability that there is a tie. That is, the probability that the voter casts a decisive vote.}
 $p \approx 0$. If voting is costly, these models predict very low turnout---essentially  that only the zero-cost voters vote when the population increases to infinity.

 \newpar{\bf Heuristic pivot probabilities:} Some models assume that the voters act based on estimated or even completely wrong beliefs.  One such heuristic is derived by Osborne and Rubinstein's \emph{sampling equilibrium}~\cite{OSBORNE2003434}, where each voter estimates candidates' scores based on a small random sample of other voters.  That is, even with a large  population when voters compute their beliefs based on limited information, there is a non-negligible fraction of voters who believe they are pivotal. % and the expected margin is the main factor in determining if a voter sees herself as pivotal.

 \rmr{the pivot probability is like Binomial but with the sample size $k$ instead of $m$. If $k$ is fixed then $p\cong (1-m^2)^k$ which is sort of tie-sensitive since there is no dependency on $n$ at all. If $k$ is increasing with $n$ then it is vanishing. }
\newpar{ \bf Non-probabilistic uncertainty:} Other models stir away from  probability calculations and consider other voters' heuristics, based on dominated actions, minmax outcomes or regret~\cite{Aldrich,Ferejohn75,Meir14,Merrill1981}. %Merrill~\shortcite{Merrill1981} suggested algorithms for  computing voters' best responses in the calculus of voting, without any assumption on the source or accuracy of pivot probability. 
The \emph{margin} of victory plays a major role in the voter's perceived  pivotality. Controlled experiments on voters' response to poll information show that \emph{strategic}  voting is more frequent when the winning margin is small~\cite{meir2020strategic,Ferejohn75}. %   It is also observed that  the  decrease in participating voters is very slow and not consistent with standard models of rational voting behaviour \cite{Blais,fairstein2019modeling}. %\citet{Blais} observes empirically that the closeness of election inversely affects voter turnout. 
~\cite{fairstein2019modeling} show  that voters' actions are more consistent with various heuristics based on the margin than with `rational' utility maximization models.  The key point in these models is that voters form a heuristic belief of their pivotality based on the voting profile, and that an `equilibrium' means the belief justifies itself, not that it is necessarily correct~\cite{Aldrich, 2018Meir}.  %The voters decision to vote crucially  depends on the perceived margin of victory.

  \rmr{here too we can write much less about these models. the key point is that voters form a heuristic belief of their pivot probability based on the voting profile, and that an equilibrium means the belief justifies itself. Another such model is by Rubinstein (see reference and details in my book)}
  \iffalse
 \textbf{Minimax regret model (\cite{Ferejohn74,Ferejohn75,Aldrich}):}
  The regret is a loss incurred by the voter by not taking a utility maximizing action in a given state of the world. State-of-the-world are often represented in terms of margin of victory \cite{Aldrich}. That is, if a given voter's preferred candidate looses an election by 1 vote, then the voter will regret the decision to abstain. Similarly, if the voter votes for her preferred candidate and finds out that his candidate wins an election by a large margin he will still incur a positive  regret. The true state-of-the-world is unknown and voters must make their choice without true state-of-the-world. The minimax regret model hypothesize that the voters take a  decision that maximizes his/her  utility in the worst case scenario.       While the minimax regret captures the dependency of voter turnout on margin of victory, it ignores the dependence on the population size. 

  \textbf{Local Dominance Model \cite{Meir14}}
  In this behavioral heuristic model, each voter considers a set of possible states of the world without assigning probabilities to them. This set is constructed based on voters information about candidate's likely  votes from sources such as polls, media etc. In a voting equilibrium, all voters vote for candidates not dominated within the set of possible states---where again local domination is determined by the margin of victory.
  \fi 
  
\medskip
Analyzing every such model separately would be tedious and leave us with an isolated set of narrow results. Instead, we identify the main factors common to all of these models, and suggest a flexible framework that captures a wide range of possible rational and boundedly-rational behaviours, without committing to one model in particular.\footnote{Although some of our theoretical results make stronger assumptions for technical reasons.} %We then study the conditions under which the better/popular candidate $A$ is more likely to win, and how winning probability is affected by the size of the population 
We still consider that the voters seek to maximize their own expected utility as in the calculus of voting model (Eq.~\eqref{eq:calculus}). However, in contrast to the calculus of voting model, in our model voters decide to abstain based on \emph{heuristically} estimated perceived pivotality. This estimate is also a function of the margin of victory as in the latter models we mentioned. Our proposed heuristic model for computation of perceived pivotality is inline both with empirical~\cite{edlin2007voting,fedderson12} and experimental~\cite{Blais,fairstein2019modeling} findings, and with the models developed therein that aim to explain high turnout in  elections. 
% Moreover, we assume that all voters apply the same heuristic but are diverse in terms of individual voting costs. \rmr{can we say anything at all if heuristics are diverse?} \gan{not really!} %We remark here  that though we call these heuristic based models probability models the pivot probability is   not the true of a probability of a tie. 

%We study  two central questions in voting in large elections. First, can a credible boundedly rational model explain significant turnout in elections and second,   what can we say about equilibrium outcome when the voting population is large. In particular, \rmr{does the better candidate win w.p. that goes to 1 as the CJT predicts.} \gan{Not really! specially when the supprt functions are high order polynomials ...} is the win guaranteed for some contestant when population size goes to infinity.  

%The "pivotality"  of the vote is a function of  two parameters; size of the voting population  and perceived margin of victory. The pivotality of vote   induces an  (often more than one) equilibrium characterized by a cost threshold; only voters below a cost threshold.   %\footnote{Only the voters having private cost to vote parameter value less than the induced threshold value will vote.}
%\rmr{Say that empirical evidence and experiments support that.}

%We agree with  Feddersen and 
%Sandroni \cite{Feddersen06} who observe that  \emph{"given the extensive evidence of strategic
 %voter behavior, it is unsettling that so far there is not a canonical rational choice model of voting  in elections with costs to voters".}     
 We emphasize here that we do not claim to provide a new universal model of strategic voting.  Rather, we want to capture a broad class of models that share similar properties (dependency on margin and population size) in order to provide results that are not model-specific. %These key properties are also observed in practice (see Section \ref{sec:related} for details).  
%The second goal of this paper is to analyze the equilibrium outcomes of large elections.
%The abstention model  studied in this paper is based on rational voting  based studies initiated by  \cite{Downs57} and later developed by \cite{riker68}.
%\subsection{Abstention in Voting }
\paragraph{Contributions}
We show that the set of equilibrium points  induced  by proposed abstention model consists of equilibrium points predicted by the fully rational  (such as calculus of voting) model that satisfy Condorcet's Jury Theorem (CJT) and some more \emph{non-trivial} equilibrium points.  Our results show that there are \emph{non-trivial} equilibria where the popular  candidate is more likely to win, but the winning probability is bounded away from 1. %That is, in contrast to CJT there is at least a constant probability that the minority candidate would win, regardless of the number of voters. 
This probability, perhaps surprisingly, depends only on the distribution of voting costs in the population. % We further show, that these equilibrium points converge to \emph{pivot} points. 
The bottom line of this paper  is the result that under a plausible abstention model, induced non-trivial equilibrium points evade both; Down's paradox of voting (only a small  fraction of homogeneous voters vote) as well as CJT (the win probability of popular candidate approaches 1).

\section{Model}
\label{sec:model}
We study a two-candidate (referred to as $A$ and $B$)   election with   $N$ voters.  Each  voter  is either a supporter of $A$ (prefers candidate~$A$, i.e. $A \succ_{i} B$)  or a supporter of $B$. 
We adopt the classical calculus of voting model  as follows. We denote  as  $ c_i := \max \{ 0,\frac{\texttt{G}_i - \texttt{D}_i}{\texttt{V}_i} \} $ the \emph{effective} cost of voting for voter $i$  with $\texttt{G}_i$, $\texttt{D}_i$ and $\texttt{V}_i $ as defined in Eq.~\eqref{eq:calculus}. 

The core supporters (voters with zero effective voting cost) derive more utility from voting for their preferred candidate  than costs incurred  in participating in the voting process.  Voters with non-zero effective costs, on the other hand, vote only if the perceived pivotality of their vote exceeds the voting cost.  For mathematical convenience, we  normalize the effective cost of voting to lie between 0 and 1.  %We call such voters as core supporters.     %In particular, a rational voter $i$ votes only if $p \cdot \texttt{V} - c_i \geq 0$ and abstains otherwise.   
%where: 
%\rmr{similar description but less detailed is in the intro}
%\begin{itemize}
%\item $c_i$ is the cost that $i$ needs to invest in order to vote;
 %   \item  
  %Here  $V$ is the difference in valuations between candidates. 
  %Throughout the paper, without loss of generality, we also assume that $\texttt{V}=1$  for all voters. For otherwise,  we can scale costs appropriately.  

%\end{itemize}

%The  cost-to-vote parameter $c_i$ of voter~$i$ of type $T\in \{A,B\}$ is an independent sample from a common distribution $\mathcal{D}_T$ on $[0,1]$.  The cost distributions $\mathcal{D}_A,\mathcal{D}_B$ are common knowledge among voters whereas each voter realizes her own cost-to-vote parameter   privately. 

We consider the effective cost of voting and the preference of voter~$i$ as  an independent sample from a commonly known  joint distribution $\mathcal{D}$ over $[0,1] \times \{A,B\}$ and is denoted by tuple $(c_i, T_i)$. We assume that $\calD$ has no atoms, expect possibly at $c=0$ and/or $c=1$. 

%Here $c_i \in [0,1]$ is the   cost of voting of agent $i$ and $T_i$ is her type.
%If $T_i = A$ (similarly if $T_i = B$) we call voter $i$ a supporter of $A$ (and supporter of $B$ respectively). %Additionally, if $c_i=0$ we call  voter $i$ a {\em core} supporter of candidate $T_i$. 
Without loss of generality, we assume  that (weakly) more voters support $A$  in expectation. 

We therefore consider $A$ as the ``better'' or ``popular'' candidate, and see the outcome where $A$ is elected as preferable.

\medskip
As discussed above, Condorcet Jury Theorem states that  \emph{if all voters vote}, the probability that $A$ wins goes to 1 as the number of voters grows.
However, this is clearly not always true if $A$ voters are more likely abstain than $B$ voters. This, in turn, may depend both on \emph{individual voters' costs}, and on \emph{voters' perceived pivotality} (which we assume to be the same for all voters). 

%We next describe how each of these factors is captured in our model.

\paragraph{Perceived pivotality}
In the discussion above, we highlighted that the two most important factors that voters use to determine the probability of various outcome are (1) the size of voting population, $n$; and (2) the margin, $m$. Our simplifying assumption (following e.g.~\cite{MyersonWeber}) is that voters only consider $n$ and $m$ as expected values. Thus, a general \emph{pivotality model} is specified by a function $p$, which maps any pair of $n$ and $m$ to a pivot probability $p(n,m)\in[0,1]$, and is continuous and non-increasing in both parameters. We later provide several concrete pivotality models, which can either approximate the real pivot probability or reflect subjective beliefs. 

 \paragraph{Support functions}
While  $\calD$ contains all necessary information regarding the distribution of costs in the population, we would like to present costs in a more intuitive way. 

For $T\in \{A,B\}$ 
let $s_T:[0,1]\rightarrow [0,1]$  be continuous, non-decreasing function, where $s_T(c)$ should be read as the fraction of the entire population (except voter $i$) that prefers $T$ and  has individual cost at most $c$. We call $s_T$ the \emph{support function} of $T$, and note that it does not depend on the identity of voter $i$.
\begin{restatable}{proposition}{propOne}
    Any distribution $\calD$ induces a unique pair of support functions $s_A,s_B$ with $s_A(1)+s_B(1)=1$, and vice-versa.
\end{restatable}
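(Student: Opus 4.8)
The plan is to exhibit the correspondence in both directions and check the claimed normalization and uniqueness. First I would go from $\calD$ to $(s_A,s_B)$: given the joint law $\calD$ on $[0,1]\times\{A,B\}$, simply define $s_T(c) := \P_{(c_i,T_i)\sim\calD}[\,c_i\le c \text{ and } T_i = T\,]$ for $T\in\{A,B\}$. I would then verify the required properties: $s_T$ is non-decreasing since $\{c_i\le c\}$ is monotone in $c$; it is continuous because $\calD$ has no atoms except possibly at $c=0$ and $c=1$ (the atom at $c=0$ does not break right-continuity at $0$ or continuity elsewhere, and similarly at $c=1$ — this is the one spot where the ``no atoms'' hypothesis is actually used, so I would state it carefully); its range lies in $[0,1]$ as a probability; and $s_A(1)+s_B(1) = \P[c_i\le 1] = 1$ since costs are normalized to $[0,1]$. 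Because $\calD$ is a probability measure, these two functions are uniquely determined.

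For the converse direction, given continuous non-decreasing $s_A,s_B:[0,1]\to[0,1]$ with $s_A(1)+s_B(1)=1$, I would reconstruct $\calD$ as follows: let $\mu_T$ be the Lebesgue–Stieltjes measure on $[0,1]$ associated with the non-decreasing function $s_T$ (i.e. $\mu_T((a,b]) = s_T(b)-s_T(a)$, with the appropriate convention for the atom at $0$, namely mass $s_T(0)$ there), and set $\calD := \mu_A\otimes\delta_A + \mu_B\otimes\delta_B$ viewed as a measure on $[0,1]\times\{A,B\}$. The normalization $s_A(1)+s_B(1)=1$ ensures $\calD$ is a probability measure. Continuity of $s_T$ on $(0,1)$ forces $\mu_T$ to have no atoms in the open interval, so $\calD$ has no atoms except possibly at $c=0$ and $c=1$, matching the class of distributions we allow. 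One then checks that applying the forward map to this $\calD$ recovers $s_A,s_B$, and conversely that starting from an arbitrary admissible $\calD$, forming the $s_T$'s, and reconstructing gives back $\calD$ (this uses that a finite Borel measure on $[0,1]$ is determined by its values on half-open intervals plus the atom at $0$). That yields the bijection.

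The main obstacle — really the only subtle point — is bookkeeping at the endpoints $c=0$ and $c=1$, where atoms are permitted. Concretely, one must decide whether $s_T(0)$ is meant to include the mass at $c=0$ (it should, given the reading ``cost at most $c$''), and make sure continuity is interpreted as continuity on $(0,1]$ together with right-continuity at $0$, since a genuine atom at $0$ means $s_T$ jumps from $0$ to $s_T(0)$ there in the cumulative sense. I would handle this by stating the convention explicitly once and then noting that with that convention the forward and backward constructions are mutually inverse. Everything else — monotonicity, the range being $[0,1]$, and the sum-to-one condition — is immediate from the definitions and requires no real work. I would also remark that the clause ``except voter $i$'' in the definition of $s_T$ is immaterial here: since $\calD$ is a product over voters and we are describing population-level fractions in the large-$N$ (or exchangeable) sense, removing one voter does not change the limiting distribution, so $s_T$ is well-defined independently of $i$, exactly as asserted in the statement.
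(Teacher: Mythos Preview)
Your proposal is correct and follows essentially the same approach as the paper: define $s_T(c)$ as the $\calD$-probability of $\{c_i\le c,\ T_i=T\}$, check continuity via the no-atoms assumption, and reconstruct $\calD$ in the reverse direction from the two sub-CDFs. The paper phrases the forward direction via expected sample averages $X_T^{(i)}(c)$ (which is what absorbs the ``except voter $i$'' clause you remark on) and the reverse direction via the conditional CDFs $s_T(c)/s_T(1)$ rather than Lebesgue--Stieltjes measures, but these are just different packagings of the same construction; your treatment of the endpoint atoms is in fact more careful than the paper's.
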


\subsection{Issues and Elections}
An \emph{issue} is a triple $I=(s_A,s_B,p)$, i.e. it describes the support in the population for the two possible decisions $A$ and $B$ on the issue, as well as the perceived pivotality of voters on that issue.

An issue does not yet induce a valid game, since the set of players is not well defined. 
An \emph{Election} $E=(I,N)$ is an issue $I$ together with a specific number of players $N$, who are assumed to be sampled i.i.d. from distribution $\calD$ (or, equivalently, from the support functions).

An election is therefore a game, where the type of each voter/player is a random variable. Each player has two actions: vote for her preferred candidate or abstain.\footnote{Voting for the less preferred candidate is strictly dominated so we ignore this option.} A voter of type $(T_i, c_i)$ suffers a cost of $c_i$ if she votes, irrespective of the outcome, and gains 1 if her preferred candidate $T_i$ wins. 

A pure strategy profile in this game maps voter types to actions (vote/abstain).

We argue that pure Bayes-Nash equilibria in this game have a very simple form, in which all voters with cost lower than some threshold vote.

% Each voter wants to maximize her expected utility, which means to vote if and only if $c_i$ is not higher than the (perceived) probability that $i$ is pivotal.

% Since by our definition, this probability depends only on the number of active voters $n$ and the expected margin $m$. 

% Thus a given pivot probability (or threshold cost)    
%  $c$ determines who votes (all voters with $c_i\leq c$), which in turn determines via the support functions both the expected number of active voters of each type $T\in\{A,B\}$, and thus the total expected number of active voters (denoted $n(c)$) and the expected margin (denoted $m(c)$). Formally: 
% Under these $m$ and $n$, some previously active voters may decide to abstain (if $c_i>p(n(c),m(c))$
% ), and others may decide to vote (if $c_i\leq p(n(c),m(c))$), which would change $n$ and $m$, and so on.

\begin{restatable}{proposition}{PropEquilibrium}(Election Equilibrium) For a given election  $(s_A,s_B,p,N)$,  every pure Bayes-Nash equilibrium can be described by a single number $c\in[0,1]$ s.t.:
\begin{align}
    n(c)&=(s_A(c)+s_B(c))N \label{eq:n_c}\\
    m(c)&=\frac{|s_A(c)-s_B(c)|}{s_A(c)+ s_B(c)}\label{eq:m_c}\\
    c&=p(n(c),m(c)), \notag
    \end{align}
    and each voter $i$ votes iff $c_i\leq c$.
\end{restatable}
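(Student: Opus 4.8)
The plan is to show two things: first, that any pure Bayes-Nash equilibrium must take the threshold form "vote iff $c_i \le c$" for some $c$, and second, that the threshold $c$ is characterized exactly by the fixed-point equation $c = p(n(c), m(c))$ with $n(c)$, $m(c)$ as defined.

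First I would fix a pure strategy profile $\sigma$ and analyze a single voter $i$'s best response. Conditional on her type $(T_i, c_i)$, voter $i$ compares abstaining (utility $0$ from the voting-cost/benefit tradeoff, after the normalization in Eq.~\eqref{eq:calculus}) against voting (utility $p_i \cdot 1 - c_i$ where $p_i$ is her perceived pivotality). Since the pivotality model stipulates that perceived pivotality is the same for all voters and equals $p(n,m)$ evaluated at the \emph{expected} population size and margin induced by the \emph{others'} strategies, this common value — call it $p^\star$ — does not depend on $i$ at all. Hence $i$'s best response is: vote iff $c_i \le p^\star$ (with ties broken arbitrarily on the measure-zero set $c_i = p^\star$, using the no-atoms assumption on $\calD$ except possibly at $0$ and $1$; I would handle the boundary atoms by noting that a voter with $c_i = 0$ always weakly prefers to vote and a voter with $c_i = 1$ votes only if $p^\star = 1$, which is consistent with the threshold description). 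This already forces the profile to be a threshold profile with threshold $c := p^\star$.

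Next I would compute what $n$ and $m$ actually are under a threshold-$c$ profile, to close the fixed-point. Under "vote iff $c_j \le c$", the set of active voters is exactly those sampled voters with cost at most $c$. By definition of the support functions $s_A, s_B$ (the fraction of the population preferring $T$ with cost at most $c$), the expected number of active voters is $(s_A(c) + s_B(c))N$, which is $n(c)$ in Eq.~\eqref{eq:n_c}; and the expected margin, as a fraction of active voters, is $\frac{|s_A(c) - s_B(c)|}{s_A(c)+s_B(c)}$, which is $m(c)$ in Eq.~\eqref{eq:m_c}. (Here I am using the simplifying assumption, stated in the Model section, that voters evaluate $p$ at these expected values rather than at the random realizations — so no concentration argument is needed; it is definitional.) Substituting, the common perceived pivotality is $p^\star = p(n(c), m(c))$, and since $c = p^\star$ we obtain precisely $c = p(n(c), m(c))$. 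Conversely, any $c$ solving this system yields a consistent profile: every voter's threshold best-response is again "vote iff $c_i \le c$", so the threshold profile is a Bayes-Nash equilibrium.

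The main obstacle I anticipate is purely a matter of careful bookkeeping at the boundary / tie-breaking cases rather than a substantive difficulty: one must argue that no non-threshold equilibrium can exist even when $\calD$ has atoms at $c = 0$ or $c = 1$, and that the ``iff $c_i \le c$'' versus ``iff $c_i < c$'' ambiguity is immaterial because it only affects a $\calD$-null set of types (when there is no atom at $c$) or can be absorbed into the statement (when $c \in \{0,1\}$). I would also want to be slightly careful that $p^\star$ is well-defined as a single number — this follows because, in a symmetric-by-type setup, the distribution of others' votes seen by voter $i$ does not depend on $i$, and the pivotality function depends only on that distribution through its expectations $n$ and $m$. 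With these points addressed, the characterization follows essentially by unwinding definitions.
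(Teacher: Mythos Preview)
Your proposal is correct and follows essentially the same approach as the paper: both arguments rely on the observation that the perceived pivotality $p^\star$ is common to all voters and independent of type, so the best response is necessarily a threshold in cost, and consistency then forces the fixed-point equation $c = p(n(c),m(c))$. Your organization is in fact slightly cleaner than the paper's (which starts from a fixed point $c$ and then separately rules out non-threshold interval strategies via case analysis), and you are more careful about the atom/tie-breaking edge cases, but the substance is the same.
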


For technical reasons we will assume throughout the paper that all derivatives of of $s_A,s_B$ are bounded. 

Denote by $C^*(I,N)\subseteq [0,1]$ the set of all equilibrium points of election $E=(I,N)$.

\begin{restatable}{proposition}{propTwo}\label{prop:eq_exists}
Every election has at least one equilibrium.
\end{restatable}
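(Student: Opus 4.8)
The plan is to recast the equilibrium condition from Proposition~\ref{prop:eq_exists}'s predecessor as a fixed point of a single continuous self-map of $[0,1]$ and then invoke the intermediate value theorem (or Brouwer in dimension one). Concretely, define $g:[0,1]\to[0,1]$ by $g(c) := p(n(c),m(c))$, where $n(c)=(s_A(c)+s_B(c))N$ and $m(c)=|s_A(c)-s_B(c)|/(s_A(c)+s_B(c))$ exactly as in Equations~\eqref{eq:n_c}--\eqref{eq:m_c}. By the Election Equilibrium proposition, $c$ is an equilibrium point iff $g(c)=c$, so it suffices to show $g$ has a fixed point.

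First I would check that $g$ is well-defined and continuous on all of $[0,1]$. The only subtlety is the denominator $s_A(c)+s_B(c)$ in the definition of $m(c)$, which vanishes at $c=0$ when there are no zero-cost voters. Here I would handle the boundary case separately: if $s_A(0)+s_B(0)=0$, then $n(0)=0$, and I would argue that the natural reading of the model gives $g(0)=p(0,\,\cdot\,)$; since $p$ maps into $[0,1]$ and $p(0,m)$ is some value in $[0,1]$, either $g(0)=0$ (in which case $c=0$ is already an equilibrium — the trivial one) or $g(0)>0$ and we proceed. For $c>0$ with $s_A(c)+s_B(c)>0$, both $n(c)$ and $m(c)$ are continuous (composition of continuous $s_A,s_B$ with the continuous rational expression, using that the denominator is bounded away from zero on $[\delta,1]$ once it is positive somewhere), and $p$ is continuous by assumption, so $g$ is continuous on $(0,1]$ and extends continuously to $[0,1]$.

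Then I would apply the fixed-point argument: consider $h(c) := g(c) - c$ on $[0,1]$. We have $h(0) = g(0) - 0 = g(0) \ge 0$ and $h(1) = g(1) - 1 = p(n(1),m(1)) - 1 \le 0$ since $p \le 1$. By the intermediate value theorem there exists $c^* \in [0,1]$ with $h(c^*)=0$, i.e. $g(c^*)=c^*$, which by the Election Equilibrium proposition is an equilibrium point. Hence $C^*(I,N)\neq\emptyset$.

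The main obstacle I anticipate is not the fixed-point step itself, which is routine, but pinning down the behavior of $m(c)$ and hence $g$ at $c=0$ in a way consistent with the rest of the paper — i.e. making sure the "margin" is well-defined (or given a sensible convention) precisely at the point where no one votes, so that the claim "every election has an equilibrium" is literally true rather than true-up-to-a-degenerate-case. I would resolve this by noting that when $\calD$ has an atom at $c=0$ this issue does not arise, and when it does not, $c=0$ together with the convention that an empty electorate produces a tie (margin $0$, or indeed any value — it does not matter since $n(0)=0$ feeds into $p$) makes $g(0)=p(0,0)\in[0,1]$, keeping the endpoint inequality $h(0)\ge 0$ intact. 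A secondary point worth a sentence is that continuity of $s_A,s_B$ (assumed in the model, and strengthened to bounded derivatives later) is exactly what the IVT needs, so no extra regularity is required here.
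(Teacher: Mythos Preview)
Your approach is essentially identical to the paper's: define the self-map $f(c)=p(n(c),m(c))$ on $[0,1]$, observe it is continuous by composition, and apply the intermediate value theorem to $f(c)-c$. The paper's proof is in fact terser than yours and does not address the potential $s_A(0)+s_B(0)=0$ issue you flag, so your treatment of that boundary case is extra care rather than a deviation.
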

The proof follows from the fact that $f(c):=p(n(c),m(c))$ is a continuous function from $[0,1]$ onto itself and therefore must have
a fixed point.

\paragraph{Issue equilibrium}
An issue can be thought of as a series of elections, one for every population size $N$.

\begin{definition}[Issue equilibrium]An equilibrium of issue $I$ is a series of points $\overline{c}=(c_N)_{N}$ s.t. $\forall N, c_N\in C^*(I,N)$, and $\overline{c}$ has a limit. We denote the limit by $c^*$.
\end{definition}

For an issue equilibrium $\ol c$ with limit $c^*$, if  $c_N>c^*$ for all $N$ we say that $\overline{c}$ is a \emph{right equilibrium}. Similarly, if  $c_N<c^*$ for all $N$ we say that $\overline{c}$ is a \emph{left equilibrium}.

\paragraph{Trivial equilibrium}
An equilibrium of issue $I$ is \emph{trivial} if its limit is $c^*=0$, meaning only core supporters vote. Otherwise it is nontrivial. Clearly a trivial equilibrium is always a right equilibrium. 

\paragraph{Jury theorems} For a given election $E=(I,N)$ and a given equilibrium $c\in C^*(I,N)$, we can compute the probability of various events. One such event of interest is the victory of the popular candidate $A$.\footnote{We emphasize that we, as `outsiders' to the election, care about the \emph{actual} probability of the event, which is not affected by the perceived pivotality model $p$, once $c$ is determined.}

Formally, we define for a given election $E=(I,N)$ and equilibrium $c\in [0,1]$, a random variable  counting the number of active votes for $A$ (and likewise for $B$):
$$V_A := \sum_{i\in N}\ind{c_i\leq c \wedge T_i=A},$$
where $\{(c_i,T_i)\}_{i\in N}$ are voters sampled from $\calD$.
 Note that $V_A$ is a Binomial variable with parameters $N$ and $s_A(c)$, and expected value $s_A(c)N$.

We further denote the \emph{Winning Probability} of $A$ in equilibrium $c$ of election $E=(I,N)$ as
$$\mathbb{WP}_A(I,N,c):=\Pr(V_A > V_B| I,N,c).$$

Finally, for any issue $I$ with issue equilibrium $\ol c$, we define
$$\mathbb{WP}_A(I,\ol c):=\lim_{N\rightarrow \infty}\mathbb{WP}_A(I,N,c_N).$$

We say that the instance admits a \emph{jury theorem} at $c^*\in [0,1]$, if there is  an instance equilibrium $\ol c$ with limit $c^*$ s.t. $\mathbb{WP}_A(I,\ol c)=1$.

Similarly, $I$ admits a \emph{non-jury theorem} at $c^*$ if $\mathbb{WP}_A(I,\ol c)<1$, meaning that regardless of the size of the population, there is some constant probability that the better candidate $A$ will lose.  $I$ admits a \emph{strong non-jury theorem} if $\mathbb{WP}_A(I,\ol c)=\frac12$.

\section{ Perceived Pivotality Models}
\label{sec:tie}

%In contrast, our proposed PPM is designed to capture two most important  factors determining  voter turnout (see for instance \todo{add references}); size of the election and winning margin. As stated previously,  many empirical studies show that the voter turnout is relatively more in closely contested elections and that the margin  of victory plays crucial part in drawing voters towards the voting  ballots. In our abstention model we capture this observation by an inverse relation between perceived pivotality and expected margin of victory \todo{references}. %That is, voters believe that each vote is increasingly more important in closely contested  elections. %Furthermore, this inverse relation between voter turnout and expected margin of victory holds even in large elections.
%Second, many rational  (including fully rational model) models of voting behaviour (see )predict that increasing the number of votes on ballot implies the decrease in the importance of each individual vote.    

We first describe models where the perceived pivotality $p(n,m)$  represents the actual tie probability $V_A=V_B$ (so a single vote would determine the  winner of the election). %However this probability is not, strictly speaking, a function of $n$ and $m$.
We present two  close approximations of the actual tie probability that have been used in the literature, and that can be presented as pivot functions in our model. 

  % In the  proposed PPM  we consider that the perceived pivotality is heuristically determined and  does not always reflect the probability that a single vote would determine the outcome of an election (herein lies the difference from the calculus of voting model). %We next describe support functions and how  $p$ is determined. %In the next subsection we extend the model to allow a heuristic estimation that is sensitive to small margins.
\subsection{Fully Rational models}
\label{ssec:fully rational}
\rmr{We should have a proposition for the `correct' pivot probability (which can be computed either as the expected ti-probability over $n'$ sampled from $Bin(N,n/N)$; or as the probability that in the multinomial distribution $(N,(s_A,s_B,1-s_A-s_B))$ the first two elements are equal). 
This is easy, and then we need another proposition bounding this by $\Theta(p(n,m))$ of the Binomial PPM.}

%\rmr{ "Fully Rational models" (define and analyze strongly vanishing PPM);}
 %In this paper we consider  two families of pivot probability models (PPM) that  encompass many  of the pivot probability models studied in literature (sections  \ref{sec:related}).   

% %Strongly vanishing PPM assumes  that the voters vote only if the tie probability is large enough, ignoring the margin of victory. %We begin with some examples of strongly vanishing pivot probability models.  
% Perhaps the most common pivot probability model is a Binomial model, corresponding to a known number of voters $n$ who choose independently whether to vote for $A$ or $B$. This is exactly the model used in Condorcet's Jury Theorem and 

The first model maintains the Binomial distribution, but assumes that the $n$ active voters are the entire population,\footnote{\label{fn:double_bin}The true pivot probability can also be written as a function of $m$ and $n$: it follows the Binomial model $p(n',m)$ from Eq.~\eqref{eq:Binom}, but where the number of active voters $n'$ is a random variable sampled from $Bin(N,n/N)$. Numerically this function is nearly identical to Eq.~\eqref{eq:Binom}, and in particular it has strong vanishing pivotality.\rmr{needs proof}}
 as in the original Condorcet Jury Theorem and in early Calculus of Voting models~\cite{riker68}. 
 \begin{example}[Binomial PPM] %   Let $\mathcal{I}$ be a given election instance with $s_A(0), s_B(0) > 0$. \footnote{We assume that at least one of $s_A(0)$ and $s_B(0)$ is strictly positive.} %Otherwise if one of $s_A(0)$ and $s_B(0)$ is zero then the pivot probability is zero and hence $c=0$ is a trivial equilibrium.} Fix $c \in [0,1]$ and  let $n:=n(c)$ and $m:= m(c)$. The Binomial PPM model considers the perceived pivotality as the probability of a tied election. That is, 
 \begin{equation}
  \ppm(n,m)=\Pr_{x\sim \text{Bin} \big ( n, (1+m)/2 \big)}(x = \floor{n/2}).\label{eq:Binom}
  \end{equation}
 \end{example}

  A later model by~\cite{Myerson1998} suggested drawing the scores of each candidate independently from a Poisson distribution. % with means $N \cdot s_A(c)$ and $N \cdot s_B(c)$ respectively. %Let $k_A$ voters be drawn independently from the Poisson distribution with mean $N\cdot  s_A(c)$ and $k_B$ voters from the Poisson distribution with mean $N \cdot  s_B(c)$. Candidate $A$ wins an election if $k_A \geq k_B$  and $B$ wins otherwise. 
%The perceived pivotality of the individual vote in this can be written in terms of the  probability that the number of supporters of  both candidates is equal. 
 \begin{example}
  [Poisson PPM]% Let $\mathcal{I}$ be a given election instance.   The Poisson PPM  considers the perceived pivotality as the probability that an equal number of supporters are drawn from Poisson distributions with parameters $N \cdot s_A $ and $N \cdot s_B$.
  \begin{equation}
      \ppm(n,m)=\Pr_{\substack{x_A\sim \text{Poisson}((1+m)n/2)\\ x_B\sim \text{Poisson} ((1+m)n/2)}}(x_A=x_B)\label{eq:Poisson}
  \end{equation}
 \end{example}
Conceptually, the Poisson model is more appropriate in situations where voters can abstain (as the total number of active voters is not fixed). %\footnote{Note that while $s_A, s_B$ and $N$ cannot be inferred from $n,m$,  the Poisson parameters can be inferred as $s_A\cdot N= \frac{s_A}{s_A+s_B}n=\frac{1+m}{2}n$, and $s_B\cdot N=\frac{1-m}{2}n$.} 
However $p(n,m)$ behaves very similarly to the Binomial model, and for our purpose they are almost the same.
In fact, both models belong in a much larger class of PPMs:
\begin{definition}[Vanishing Pivotality]
 We say that a PPM has strong [weak] vanishing pivotality if   $\lim_{n\rightarrow \infty}p (n,m)=0$ for all  [$ m > 0$] $m\geq 0$.
\end{definition} 

As we will later see, instances with weak vanishing pivotality always admit a trivial equilibrium. 
Clearly at the trivial equilibrium, Jury theorems are irrelevant: the candidate with more core support always wins with probability that approaches $1$ as the population grows, regardless of who is more popular overall. It is not hard to verify (e.g. using Stirling approximation) that in both the Binomial and Poisson PPMs, $p(n,m)=\Theta(\frac{1}{\sqrt{n}})$ for  $m=0$, and decreases exponentially fast in $n$ for any $m>0$. %See also a visual demonstration in Fig.~\ref{fig:pivot}. Thus both models have strongly vanishing pivotality.

 \subsection{Tie-Sensitive models}
% \gan{TODO:elaborate, explain the upper bound $T$}
We saw that even in the rational model (which have strong vanishing pivotality), the case of $m=0$ is different, with a substantially higher probability to be pivotal. 

A simple and perhaps more cognitively plausible assumption is that voters \emph{consider themselves pivotal} if the margin is small enough, regardless of the number of voters.\footnote{An alternative justification that maintains full rationality is an argument that is sometimes made about large elections being `more important', as a possible explanation for the paradox of voting~\cite{Downs57}. While this suggestion fails to explain high turnout in general (unless importance increases \emph{exponentially} with $n$), even a modest increase in importance of $\Omega(\sqrt{n})$ is enough to justify Tie-sensitivity.} %perceived pivotality model considers the margin $m$ as an important  factor so that at least with $m\rightarrow 0$ we have $\ppm(n,m) \rightarrow  1$  irrespective of the size of the voting population $n$. %That is, the voters also take into account the margin  of victory while computing the pivotality  of their vote.  

\begin{definition}[Tie-sensitive pivotality]We say that a PPM has $q$-\emph{tie-sensitive pivotality} if $p(n,0)\geq q$ for all $n$.
\end{definition}
That is, if the expected outcome is a tie, everyone thinks they are pivotal at least to some extent, regardless of the number of active voters. 

\iffalse 
\begin{figure}
    \centering
\includegraphics[width=0.95\linewidth]{AAAI25/AAAI-NonJury/pivotality.png}
    \caption{The function $p(n,m)$ under the Binomial model. We can see that every $m>0$ has some critical point after which it quickly drops to 0.}
    \label{fig:pivot}
\end{figure}
\fi 

For example,~\cite{OSBORNE2003434}  suggested a model where each voter decides whether they are pivotal based on a fixed number of other voters they sample i.i.d. from the population. This induces a PPM that is identical to the Binomial one, except that the parameter $k$ is used instead of the number of voters $n$.  Thus for a fixed $k$, the `Sampling PPM' is $q_k$-tie-sensitive for some $q_k=\Theta(\frac{1}{\sqrt{k}})$. 

Another tie-sensitive PPM is induced by assuming a polynomial dependency on both $m$ and $n$. 
\begin{example}[Polynomial PPM] For $\alpha,\beta>0$
$$p(n,m)=\min\{q,\frac{1}{m^\alpha n^\beta}\}$$
\end{example}
It is immediate from the definition that the Polynomial PPM is both $q$-tie-sensitive and has a weakly vanishing pivotality. %\footnote{1-sensitivity may sound extreme but note that it is only a matter of scaling the utility voters get from changing the outcome ($V_i$) vs. the costs and benefits of voting ($D_i-G_i$).}  
One particularly natural choice is when $\beta=\frac12$ (so that the dependency on $n$ is similar to the previous models), and $\alpha=1$, i.e. linear dependency on the margin.

\section{Characterizing Equilibrium Limits}
We begin by characterizing the trivial equilibrium.

\begin{restatable}
{proposition}{propThree}\label{prop:strong_trivial}
    Suppose $s_A(0)+s_B(0)>0$. Any  instance with strongly vanishing pivotality admits \textbf{only}  a trivial equilibrium.
\end{restatable}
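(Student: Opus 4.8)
The plan is to show something slightly stronger than the statement: for \emph{every} election $E=(I,N)$, \emph{every} equilibrium point $c_N\in C^*(I,N)$ tends to $0$ as $N\to\infty$. This immediately implies that every issue equilibrium has limit $c^*=0$, and since $C^*(I,N)\neq\emptyset$ for all $N$ by Proposition~\ref{prop:eq_exists}, a (necessarily trivial) issue equilibrium exists; hence the instance admits only a trivial equilibrium.

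First I would use the hypothesis $s_A(0)+s_B(0)>0$ together with the monotonicity of the support functions to get a turnout lower bound at every threshold: for all $c\in[0,1]$, $s_A(c)+s_B(c)\ge s_A(0)+s_B(0)=:\delta>0$. Consequently, for any equilibrium point $c_N$ of $(I,N)$, the expected turnout satisfies $n(c_N)=(s_A(c_N)+s_B(c_N))N\ge \delta N\to\infty$. This also guarantees that $m(c_N)$ in Eq.~\eqref{eq:m_c} is well-defined, since its denominator is at least $\delta>0$, and of course $m(c_N)\ge 0$.

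Next I would plug this into the equilibrium fixed-point equation $c_N=p(n(c_N),m(c_N))$ and exploit monotonicity of $p$ in its second argument: since $m(c_N)\ge 0$ and $p(n,\cdot)$ is non-increasing, $0\le c_N=p(n(c_N),m(c_N))\le p(n(c_N),0)$. Strong vanishing pivotality applied at $m=0$ gives $\lim_{n\to\infty}p(n,0)=0$, and combined with $n(c_N)\to\infty$ this forces $p(n(c_N),0)\to 0$, hence $c_N\to 0$. Therefore any sequence $(c_N)_N$ with $c_N\in C^*(I,N)$ converges to $0$, so it is a trivial issue equilibrium, and no issue equilibrium can have a positive limit.

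I do not expect a real obstacle here; the argument is short. The one point worth flagging is that it crucially uses the $m=0$ case of vanishing pivotality — which is exactly what separates the \emph{strong} notion from the \emph{weak} one. Under merely weak vanishing pivotality, $p(n,0)$ need not tend to $0$, so the bound $c_N\le p(n(c_N),0)$ no longer collapses; this is why the weak case only guarantees that a trivial equilibrium \emph{exists}, and leaves room for additional nontrivial equilibria.
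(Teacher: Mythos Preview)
Your argument is correct and follows essentially the same idea as the paper's proof: use the positive core support to force $n(c_N)=\Omega(N)$, and then apply vanishing pivotality to drive $c_N$ to $0$. Your version is in fact a bit cleaner than the paper's --- you argue directly rather than by contradiction, and your explicit reduction to the $m=0$ case via monotonicity in the second argument is exactly what isolates the \emph{strong} hypothesis, whereas the paper's write-up invokes an $O(1/\sqrt{n})$ rate that is not part of the definition.
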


\begin{restatable}{proposition}{propFour}\label{prop:weak_trivial}
    Suppose $s_A(0)\neq s_B(0)$. Then any instance with weakly vanishing pivotality admits a trivial equilibrium.
\end{restatable}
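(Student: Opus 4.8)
The plan is to exhibit one explicit trivial equilibrium. Write $\kappa := s_A(0)+s_B(0)$ and $m_0 := |s_A(0)-s_B(0)|/\kappa$; the hypothesis $s_A(0)\neq s_B(0)$ forces $\kappa>0$, so both quantities are well defined and $m_0>0$. By the characterization of election equilibria, a point $c$ is an equilibrium of $E=(I,N)$ iff it is a fixed point of the continuous self-map $f_N(c):=p(n(c),m(c))$ of $[0,1]$, where $n(c)=(s_A(c)+s_B(c))N$ and $m(c)=|s_A(c)-s_B(c)|/(s_A(c)+s_B(c))$. It therefore suffices to show: for every $\delta>0$ there is $N_0(\delta)$ such that for all $N\geq N_0(\delta)$ the map $f_N$ has a fixed point in $[0,\delta]$. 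Granting this, apply it with $\delta=1/k$ to obtain thresholds $N_k$, which we may take strictly increasing; setting $c_N$ to be such a fixed point with $c_N\le 1/k$ whenever $N_k\le N<N_{k+1}$, and $c_N$ an arbitrary equilibrium (which exists by Proposition~\ref{prop:eq_exists}) for $N<N_1$, produces an issue equilibrium $\ol c=(c_N)_N$ with $c_N\to 0$, i.e. a trivial equilibrium.

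To prove the claim, fix $\delta>0$. Since $s_A,s_B$ are continuous and $s_A(0)+s_B(0)=\kappa>0$, the function $m(\cdot)$ is continuous on a neighbourhood of $0$, so there is $\delta'\in(0,\delta]$ with $m(c)\geq m_0/2$ for all $c\in[0,\delta']$. Because $s_A,s_B$ are non-decreasing we also have $n(c)\geq n(0)=\kappa N$ on $[0,\delta']$. Using that $p$ is non-increasing in each argument,
\[
f_N(c)=p(n(c),m(c))\leq p(\kappa N,\,m_0/2)\qquad\text{for all }c\in[0,\delta'].
\]
By weak vanishing pivotality applied to the fixed positive margin $m_0/2$, we have $p(\kappa N,m_0/2)\to 0$ as $N\to\infty$, so there is $N_0(\delta)$ with $p(\kappa N,m_0/2)<\delta'$ for all $N\geq N_0(\delta)$. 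Hence, for such $N$, $f_N$ maps $[0,\delta']$ into $[0,\delta')$.

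Finally, for $N\geq N_0(\delta)$ consider $g(c):=f_N(c)-c$ on $[0,\delta']$: it is continuous, $g(0)=f_N(0)\geq 0$, and $g(\delta')=f_N(\delta')-\delta'<0$, so by the intermediate value theorem $g$ vanishes at some $c_N\in[0,\delta')\subseteq[0,\delta]$, which is the desired fixed point of $f_N$ inside $[0,\delta]$. This establishes the claim, hence the proposition. The only delicate point is the localization in the second paragraph: weak vanishing pivotality only controls $p(\cdot,m)$ for strictly positive margins, so one must keep $m(\cdot)$ bounded away from $0$ on an \emph{entire} neighbourhood of $0$, not merely at the point $0$ — and this is exactly where the hypothesis $s_A(0)\neq s_B(0)$ (together with continuity of $s_A,s_B$) enters, and what distinguishes this statement from Proposition~\ref{prop:strong_trivial}.
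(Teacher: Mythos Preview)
Your proof is correct and follows essentially the same approach as the paper: bound the margin away from zero on a neighbourhood of $0$, use weakly vanishing pivotality to force $f_N$ below $\delta'$ there for large $N$, and apply the intermediate value theorem. Your write-up is in fact a bit more careful than the paper's: you explicitly construct the issue-equilibrium sequence $(c_N)_N$ from the per-$N$ fixed points (the paper only states the per-$\delta$ existence and leaves the assembly implicit), and you obtain the local lower bound on $m(\cdot)$ from continuity of $s_A,s_B$ alone rather than invoking bounded derivatives.
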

%\rmr{to the appendix.}

%\subsection{Pivot points}
The proofs of Propositions \ref{prop:strong_trivial} and \ref{prop:weak_trivial} are given in Appendix. Intuitively, the proofs of both the propositions rely on the fact that we can always consider sufficiently large population size $N$, thereby making $p(n,m)$ small enough. 

Next we  show that that  the intersection points of the support functions form the limiting points of equilibrium points.  

\begin{definition}[Pivot Points]
For a given pair of support functions $s_A,s_B$, a \emph{pivot point} is any  $c\in(0,1)$ where $s_A(c)=s_B(c)>0$, and $s_A(c+\eps)\neq s_B(c+\eps)$ for all $\eps>0$. 
\end{definition}
A pivot point is a cost $c$ such that if  all voters with $c_i\leq c$ vote then both candidates have equal support, and thus the expected margin at a pivot point  is 0. 

\begin{theorem}\label{thm:nontrivial_eq}
    Let $I$ be an instance with a PPM a that has weakly vanishing  \emph{and} $q$-tie-sensitive pivotality.  Any pivot point $c^*<q$ has a right equilibrium with limit $c^*$. Moreover, for support functions that intersect at a finite number of points, the limit of any equilibrium is either 0 or some pivot point $c^*$.
\end{theorem}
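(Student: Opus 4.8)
The statement has two parts, and I would prove them separately. Recall that by Proposition~\ref{prop:eq_exists} the equilibria of an election $(I,N)$ are exactly the fixed points of $f_N(c):=p(n(c),m(c))$, where $n(c)=(s_A(c)+s_B(c))N$ and $m(c)=\frac{|s_A(c)-s_B(c)|}{s_A(c)+s_B(c)}$; both functions are continuous in $c$.

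\textbf{Part 1: every pivot point $c^*<q$ is the limit of a right equilibrium.} Fix a pivot point $c^*$, so $s_A(c^*)=s_B(c^*)>0$ and hence $m(c^*)=0$. The idea is to show that for every large $N$ there is an equilibrium $c_N\in(c^*,c^*+\delta_N)$ with $\delta_N\to 0$, by applying the intermediate value theorem to $g_N(c):=f_N(c)-c$ on a small interval to the right of $c^*$. At $c=c^*$ we have $m=0$, so by $q$-tie-sensitivity $f_N(c^*)=p(n(c^*),0)\ge q>c^*$, giving $g_N(c^*)>0$. For the other endpoint I need a point $c^*+\delta$ with $g_N(c^*+\delta)\le 0$, i.e. $f_N(c^*+\delta)\le c^*+\delta$. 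This is where weak vanishing pivotality enters: for any fixed $\delta>0$ small enough that $c^*+\delta$ is not itself a pivot point, the margin $m(c^*+\delta)$ is some fixed positive number (using that $s_A,s_B$ separate immediately to one side — here I should be a little careful, since the definition of pivot point only guarantees $s_A\ne s_B$ on one-sided neighborhoods; if they separate only to the left I instead build a left‑approaching interval, or note that by continuity they differ on a punctured neighborhood and pick the side that works), so $p(n(c^*+\delta),m(c^*+\delta))\to 0$ as $N\to\infty$ because $n(c^*+\delta)\to\infty$. Hence for $N$ large, $f_N(c^*+\delta)<c^*\le c^*+\delta$, so $g_N(c^*+\delta)<0$. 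By IVT there is a root $c_N\in(c^*,c^*+\delta)$. Doing this for a sequence $\delta=\delta_k\downarrow 0$ and a diagonal argument over $N$ produces an equilibrium sequence $c_N\to c^*$ with $c_N>c^*$ for all large $N$ (truncating/relabeling the first finitely many terms), i.e. a right equilibrium with limit $c^*$.

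\textbf{Part 2: when $s_A,s_B$ intersect finitely often, every equilibrium limit is $0$ or a pivot point.} Let $\overline c=(c_N)$ be an issue equilibrium with limit $c^*$, and suppose $c^*>0$; I must show $c^*$ is a pivot point. Since $s_A+s_B$ is continuous and $c^*>0$, the assumption $s_A(c^*)+s_B(c^*)>0$ holds provided $\calD$ puts mass below $c^*$ — which it does, since $c_N\to c^*>0$ forces $s_A(c^*)+s_B(c^*)\ge$ (the turnout fraction at the limit)$>0$; more carefully, if $s_A(c^*)+s_B(c^*)=0$ then $s_A\equiv s_B\equiv 0$ on $[0,c^*]$, so $n(c_N)$ could still be $0$ — but then $f_N(c_N)=p(0,\cdot)$, a constant, contradicting $c_N\to c^*$ unless $c^*$ equals that constant, a boundary case I would handle directly or rule out by the no-atom assumption. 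Granting $n(c^*)>0$: we have $n(c_N)=(s_A(c_N)+s_B(c_N))N\to\infty$. Now if $c^*$ were \emph{not} a pivot point, then by the finite-intersection hypothesis $s_A(c^*)\ne s_B(c^*)$, so $m(c^*)>0$, and by continuity $m(c_N)\ge m_0>0$ for all large $N$. Weak vanishing pivotality then gives $c_N=f_N(c_N)=p(n(c_N),m(c_N))\le p(n(c_N),m_0)\to 0$ (using monotonicity of $p$ in $m$), contradicting $c^*>0$. Hence $s_A(c^*)=s_B(c^*)$; combined with $s_A(c^*)+s_B(c^*)>0$ and the finite-intersection hypothesis (which guarantees $s_A\ne s_B$ in a punctured neighborhood of $c^*$), $c^*$ satisfies the definition of a pivot point.

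\textbf{Main obstacle.} The delicate point is the handling of the one-sided nature of the pivot-point definition and the degenerate case $n(c^*)=0$ in Part~2; the finite-intersection hypothesis is exactly what rules out $s_A,s_B$ agreeing on a whole interval, so that $m$ is bounded away from $0$ near any non-pivot limit, but I need to be careful that "finitely many intersections" together with $s_A=s_B$ at the limit really does force the strict-separation-on-a-punctured-neighborhood condition, and to check the boundary cost values $c^*\in\{0,1\}$ and atoms there don't create spurious equilibria. The rest is a routine IVT-plus-diagonalization argument.
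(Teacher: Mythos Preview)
Your proposal is correct and follows essentially the same approach as the paper: the intermediate value theorem applied to $f_N(c)-c$, using $q$-tie-sensitivity at $c^*$ and weak vanishing pivotality at $c^*+\delta$ for existence, and the bounded-away-from-zero-margin contradiction for the converse. Your worry about the one-sided nature of pivot points is unnecessary --- the paper's definition already requires $s_A(c^*+\eps)\neq s_B(c^*+\eps)$ for $\eps>0$, i.e., separation specifically to the \emph{right}, which is exactly the side you need for a right equilibrium.
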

\begin{proof}
We start with existence. Let $c^*<q$ be some pivot point of $I$, and let $\delta>0$. We need to show there is some $N_\delta$ and some $c_\delta\in(c^*,c^*+\delta)$ s.t. $c_\delta\in C^*(I,N_\delta)$. 

%The proof is very similar to that of Prop.~\ref{prop:weak_trivial}. 
Since all derivatives are bounded, there is some open interval $(c^*,c^*+t)$ where $s_A,s_B$ differ, and w.l.o.g. $s_A(c)>s_B(c)$ for any $c\in (c^*,c^*+t)$. Let $\ul{\delta}:=\min\{t,\delta\}$ and note that by the definition of pivot point, $\eps:=m(c^*+\ul{\delta})>0$. Also, $n(c^*+\ul{\delta})=(s_A(c^*+\ul{\delta})+s_B(c^*+\ul{\delta}))N<N$. Thus
$$p(n(c^*+\ul{\delta}),m(c^*+\ul{\delta}))\leq p(N,\eps)\xrightarrow[N\rightarrow \infty]{} 0,$$
so there is some $N_\delta$ for which $p(n(c^*+\ul{\delta}),m(c^*+\ul{\delta}))<\ul{\delta}$.

On the other hand, 
$$p(n(c^*),m(c^*))=p(n(c^*),0) \geq p(N,0)\geq q > c^*.$$
Define $f(c^*+x):=p(n(c^*+x),m(c^*+x))-(c^*+x)$ then $f$ is continuous in the range $x\in[0,\ul{\delta})$ with $f(0)>0$ and $f(\ul{\delta})<0$. From intermediate value theorem there is some $x$ where $f(x)=0$ and thus $c_\delta:=c^*+x$ is an equilibrium of $(I,N_\delta)$.

\medskip
In the other direction, assume towards a contradiction that there a nontrivial equilibrium with limit $c^*$ that is not a pivot point. Note that by our assumption of finite intersection points, and due to bounded derivatives, $s_A(c)-s_B(c)>\eps>0$ in some interval $[c^*-\delta,c^*+\delta]$.
Thus in any point in this interval the pivotality goes to 0 for sufficiently large $N$, and in particular is lower than $c^*-\delta$, which means it is not an equilibrium of $(I,\ol{N})$ for any $\ol{N}\geq N$.

\medskip
Also note that if $q$ is tight (i.e. the PPM is not $q'$-tie-sensitive for any $q'>q$) then pivot points above $q$ cannot be the limit of any equilibrium either, as the pivotality in some environment of each such point $c^*>q$ is under $q$ for sufficiently large $N$.
\end{proof}

So we have a rather complete characterization of equilibria, or at least of their limit points, in every instance. A natural question is whether any of these equilibria are ``good'' in the sense of the Condorcet Jury Theorem.

\section{Jury Theorems for Pivot Points}
As observed above, a trivial equilibrium admits a jury theorem if and only if $A$ has more core supporters, but this is not very interesting. 

Our main question regards the non-trivial equilibria, whose limits are the pivot points. Clearly any equilibrium from a side where $B$ has more support admits a non-jury (as $A$ gets less than half the votes, in expectation), but we could still hope that, say, a right equilibrium at a pivot point where the derivative of $s_A$ is higher, would admit a jury theorem. 

To provide an answer, we would need to apply a concrete PPM, so the results in this section would be limited to the Polynomial PPMs, and for simplicity we also consider support functions with finite intersection. By the results of the previous section, we know that every pivot point has a right equilibrium and left equilibrium, and that (except the trivial equilibrium) there are no others.  

We first characterize the rate at which election equilibria $c_N$ approach their limit.
\begin{lemma}\label{lemma:poly_margin}Let $I$ be an instance with a Polynomial PPM and let $\ol c=(c_N)_N$ be a  non-trivial equilibrium of $I$ with limit $c^*$. Then $c_N=c^*+\Theta(N^{-\frac{\beta}{\alpha}})$.
\end{lemma}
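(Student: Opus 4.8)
The plan is to work with the equilibrium condition $c_N = p(n(c_N), m(c_N))$ from Proposition~\ref{prop:eq_exists}'s companion (Election Equilibrium), specialized to the Polynomial PPM $p(n,m)=\min\{q, 1/(m^\alpha n^\beta)\}$. Since $\overline c$ is a non-trivial equilibrium with limit $c^* < q$ (a pivot point, by Theorem~\ref{thm:nontrivial_eq}), for $N$ large enough we have $c_N < q$, so the $\min$ is inactive and the equation becomes
\begin{equation}
c_N = \frac{1}{m(c_N)^\alpha\, n(c_N)^\beta}. \notag
\end{equation}
First I would record the local behavior of $n(\cdot)$ and $m(\cdot)$ near $c^*$. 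Write $\sigma(c) := s_A(c)+s_B(c)$ and $\Delta(c) := |s_A(c)-s_B(c)|$, so $n(c) = \sigma(c)N$ and $m(c) = \Delta(c)/\sigma(c)$. At a pivot point $\sigma(c^*) = 2 s_A(c^*) > 0$, so $\sigma$ is bounded away from $0$ in a neighborhood, giving $n(c_N) = \Theta(N)$. Also $\Delta(c^*) = 0$ and, because $s_A - s_B$ changes sign (or at least is nonzero on one side) with bounded and nonvanishing-to-first-order derivative — here I would invoke the bounded-derivatives assumption plus the pivot-point definition to get that $|s_A-s_B|$ vanishes exactly to first order, i.e. $\Delta(c) = \Theta(|c-c^*|)$ for $c$ near $c^*$ on the relevant side. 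Hence $m(c_N) = \Theta(|c_N - c^*|)$.

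**Deriving the rate.**
Substituting these estimates, the equilibrium equation gives
\begin{equation}
c_N = \Theta\!\left( \frac{1}{|c_N-c^*|^{\alpha}\, N^{\beta}} \right). \notag
\end{equation}
Since $c_N \to c^* > 0$, the left side is $\Theta(1)$, so $|c_N - c^*|^{\alpha} N^{\beta} = \Theta(1)$, which rearranges to $|c_N - c^*| = \Theta(N^{-\beta/\alpha})$. Combined with the sign information (right equilibrium means $c_N > c^*$, left means $c_N < c^*$) this yields $c_N = c^* + \Theta(N^{-\beta/\alpha})$, as claimed. The argument is symmetric in the two cases; I would just note that for a left equilibrium $s_B > s_A$ locally but $\Delta = \Theta(|c-c^*|)$ still holds.

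**Main obstacle.**
The routine part is the substitution; the step requiring real care is justifying $\Delta(c) = \Theta(|c - c^*|)$ near the pivot point — i.e. that the support functions cross $c^*$ transversally. The definition of pivot point only guarantees $s_A(c^*)=s_B(c^*)$ and $s_A(c^*+\eps)\neq s_B(c^*+\eps)$ for all small $\eps>0$, which a priori allows the difference to vanish to higher order (e.g. $\Delta(c)\sim (c-c^*)^2$), in which case the exponent would change. I would handle this either by strengthening the standing assumption (the paper already assumes all derivatives of $s_A,s_B$ are bounded; presumably one also wants $s_A' - s_B'$ nonzero at pivot points, or this should be added as a hypothesis to the lemma), or by noting that the lemma's $\Theta$ should be read under the generic transversality condition $s_A'(c^*)\neq s_B'(c^*)$; I would state this explicitly. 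A secondary, minor point to check is that the $\min$ with $q$ is genuinely inactive for all large $N$ — this follows since $c_N\to c^*<q$, so $c_N<q$ eventually, and the equation $c_N = 1/(m^\alpha n^\beta)$ is exactly the active branch. With transversality in hand, everything else is a direct two-sided estimate, so I expect the write-up to be short.
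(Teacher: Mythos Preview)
Your proposal is correct and follows essentially the same route as the paper: both set up the equilibrium equation $c_N = 1/(m(c_N)^\alpha n(c_N)^\beta)$, use $n(c_N)=\Theta(N)$ and $m(c_N)=\Theta(|c_N-c^*|)$ near the pivot point, and solve for $|c_N-c^*|$. The paper simply asserts ``$m'(c^*)>0$ by the definition of pivot point'' and proceeds with explicit two-sided bounds on $m'$ and $n'$ in a neighborhood; you arrive at the same linear estimate $\Delta(c)=\Theta(|c-c^*|)$ but are more explicit than the paper in flagging that this requires transversality $s_A'(c^*)\neq s_B'(c^*)$, which the pivot-point definition alone does not guarantee.
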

That is, the point $c_N$ must be at distance that decreases proportionally to $N^{-\frac{\beta}{\alpha}}$: not closer neither farther away. 
\begin{proof}
We will prove for $c_N>c^*$. The proof for $c_N<c^*$ is symmetric.

    By Theorem~\ref{thm:nontrivial_eq}, $c^*$ is a pivot point. 

    Since the first and second derivatives of $s_A,s_B$ are bounded, so are those of the functions $n(c)$ and $m(c)$ (as per Eq.~\eqref{eq:n_c},\eqref{eq:m_c}). Also note that $n'(c)\geq 0$ everywhere and $m'(c^*)>0, n'(c^*)>0$ by the definition of pivot point. We also denote $s^*:=s_A(c^*)+s_B(c^*)$ and note that $s^*\cdot N = n(c^*)$.

    Therefore there must be some interval $[c^*,c^*+\delta]$ and constants $\ol m,\ul m,\ol n,\ul n>0$ such that $m'(c)\in [\ul m,\ol m]$ and $\frac{n'(c)}{N}\in[\ul n,\ol n]$ for all $c\in [c^*,c^*+\delta]$. Moreover, $\ul m,\ol m$ can be arbitrarily close to $m'(c^*)$ and likewise for $n'$ (the functions are nearly-linear near $c^*$).  We consider only $N$'s large enough so that $c_N$ is in this interval, and so that $p(n(c_N),m(c_N))<1$.

    Now, let $\eps:=c_N-c^*$, and w.l.o.g. $\eps<\min\{0.1,0.1/\ol n\}$. We want to show $\eps=\Theta(\frac{1}{N^\frac\beta\alpha})$. From the definition of equilibrium,
    \begin{align*}
        c^*+\eps &= c_N = p(n(c_N),m(c_N)) = \frac{1}{m(c_N)^\alpha n(c_N)^\beta}\\
        &=\frac{1}{m(c^*+\eps)^\alpha n(c^*+\eps)^\beta} \Rightarrow
    \end{align*}
    \begin{equation}\label{eq:m_c_star_bound}
        m(c^*+\eps) 
        = \frac{1}{(c^*+\eps)^\frac1\alpha n(c^*+\eps)^\frac\beta\alpha}.
    \end{equation}
        
    Now, by the constant bounds on the derivatives, 
    \begin{align}
        m(c^*+\eps) &\in [m(c^*)+\ul m \eps, m(c^*)+\ol m \eps] \notag\\
        &=  [\ul m \eps, \ol m \eps], \label{eq:m_u_o}\\
        n(c^*+\eps) &\in [n(c^*)+\ul n \eps\cdot N, n(c^*)+\ol n \eps\cdot N] \notag \\
        &= [(s^*+\ul n \eps)N, (s^*+\ol n\eps)N]\notag \\
        &\subset [s^* N, (s^*+0.1)N] \label{eq:s_star}
    \end{align}
For the upper bound, we get from Eqs.\eqref{eq:m_c_star_bound},\eqref{eq:m_u_o},\eqref{eq:s_star} 
\begin{align*}
    \ul m \eps &\leq m(c^*+\eps) \leq \frac1{(c^*)^\frac1\alpha (s^* N)^\frac{\beta}{\alpha}} \Rightarrow\\
    \eps & \leq \frac1{\ul m\cdot (c^*)^\frac1\alpha (s^*)\frac{\beta}{\alpha} N^\frac{\beta}{\alpha}}=O(\frac{1}{N^\frac{\beta}{\alpha}}),
\end{align*}
since $\ul m, c^*$ and $s^*$ are all constants. 
Likewise, for the lower bound, 
\begin{align*}
     \ol m \eps &\geq m(c^*+\eps) \geq \frac1{(c^*+0.1)^\frac1\alpha ((s^*+0.1) N)^\frac{\beta}{\alpha}} \Rightarrow\\
     \eps &= \Omega(\frac{1}{N^\frac{\beta}{\alpha}}) \text{ as required.}
\end{align*}
\end{proof}
It is also interesting that the exact rate of convergence depends on the derivative of $m(c)$, but not on that of $n(c)$, meaning that the individual slopes of $s_A,s_B$ at $c^*$ don't matter, only the difference between them.

\subsection{Computing the win probability}
We denote by $\Phi$ the CDF of the Normal distribution, i.e. $\Phi(x)=\Pr_{X\sim \calN(0,1)}(X<x)$.
\begin{proposition}[Characterization of Jury Equilibria]
    Let $I$ be an instance with a Polynomial PPM with parameters $q,\alpha,\beta>0$, and let $\ol c$ be a nontrivial right- or left-equilibrium of $I$ with limit $c^*<q$ and popular candidate $A$. Then there are three cases:
   \begin{itemize}
        \item If $\alpha>2\beta$,  then $\mathbb{WP}_A(I,\ol c)=1$;
        \item If $\alpha<2\beta$, then $\mathbb{WP}_A(I,\ol c)=\mathbb{WP}_B(I,\ol c)=0.5$;
        \item If $\alpha=2\beta$, then $\mathbb{WP}_A(I,\ol c)=\Phi\left((c^*)^\frac{-1}{\alpha}\right)$.
    \end{itemize}
\end{proposition}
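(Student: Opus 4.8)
The plan is to compute $\mathbb{WP}_A(I,N,c_N) = \Pr(V_A > V_B)$ directly via a normal approximation (Central Limit Theorem), and then see how the three regimes for $\alpha$ versus $2\beta$ emerge from comparing the convergence rate of $c_N$ to $c^*$ (given by Lemma~\ref{lemma:poly_margin}) against the natural $\sqrt{N}$ fluctuation scale of the Binomial counts. First I would set $a_N := s_A(c_N)$, $b_N := s_B(c_N)$, so $V_A\sim\mathrm{Bin}(N,a_N)$, $V_B\sim\mathrm{Bin}(N,b_N)$ are independent, and $D_N := V_A - V_B$ has mean $\mu_N = (a_N-b_N)N$ and variance $\sigma_N^2 = N(a_N(1-a_N)+b_N(1-b_N))$. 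Then $\mathbb{WP}_A(I,N,c_N) = \Pr(D_N > 0) = \Pr\!\big(\frac{D_N-\mu_N}{\sigma_N} > -\frac{\mu_N}{\sigma_N}\big)$, and by a CLT for triangular arrays (the summands are i.i.d. bounded, and $a_N\to s_A(c^*)=s_B(c^*)>0$, $b_N\to s_B(c^*)>0$, so $\sigma_N^2 = \Theta(N)$ is nonvanishing) this converges to $\Phi$ evaluated at the limit of $\mu_N/\sigma_N$, provided that limit exists; if $\mu_N/\sigma_N\to+\infty$ we get $\mathbb{WP}_A\to 1$.

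The crux is estimating $\mu_N/\sigma_N$. Since $c^*$ is a pivot point, $a_N - b_N = s_A(c_N) - s_B(c_N)$, and with $\eps_N := c_N - c^*$ we have $s_A(c^*)-s_B(c^*)=0$, so a first-order Taylor expansion gives $a_N-b_N = (s_A'(c^*)-s_B'(c^*))\eps_N + O(\eps_N^2)$; since $c^*$ is a pivot point the leading coefficient is nonzero (this is essentially what makes $m'(c^*)>0$ in the lemma's proof). Also $m(c_N) = |a_N-b_N|/(a_N+b_N)$, so $a_N - b_N = \pm m(c_N)(a_N+b_N)$, and I would instead just use the relation from Eq.~\eqref{eq:m_c_star_bound} in the lemma: at equilibrium $m(c_N) = (c^*+\eps_N)^{-1/\alpha} n(c_N)^{-\beta/\alpha}$, hence $\mu_N = \pm m(c_N)(a_N+b_N)N = \pm m(c_N) n(c_N) = \pm (c^*+\eps_N)^{-1/\alpha} n(c_N)^{1-\beta/\alpha}$. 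Since $n(c_N) = s^* N(1+o(1))$ with $s^* = s_A(c^*)+s_B(c^*)>0$, this gives $\mu_N = \Theta(N^{1-\beta/\alpha})$ with the sign $+$ when the equilibrium lies on a side where $A$ leads (which, for the popular candidate $A$, is the relevant right/left side — I would note that on a side where $B$ leads we'd get $\mathbb{WP}_A\to 0$, but the proposition implicitly restricts to the configuration it names; more carefully, the sign is determined by which of $s_A,s_B$ is larger just past $c^*$). Therefore $\mu_N/\sigma_N = \Theta(N^{1-\beta/\alpha})/\Theta(N^{1/2}) = \Theta(N^{1/2-\beta/\alpha})$.

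Now the trichotomy falls out: if $\alpha > 2\beta$ then $1/2 - \beta/\alpha > 0$ so $\mu_N/\sigma_N\to+\infty$ and $\mathbb{WP}_A\to 1$; if $\alpha < 2\beta$ then $1/2-\beta/\alpha < 0$ so $\mu_N/\sigma_N\to 0$ and $\mathbb{WP}_A\to\Phi(0) = \tfrac12$ (and symmetrically $\mathbb{WP}_B\to\tfrac12$, noting $\Pr(D_N=0)\to 0$ since $\sigma_N\to\infty$); if $\alpha = 2\beta$ then $\mu_N/\sigma_N\to L$ for a finite nonzero $L$, and I must pin down $L$ exactly. Here $\mu_N = \pm(c^*+\eps_N)^{-1/\alpha}(s^* N)^{1-\beta/\alpha}(1+o(1))$ and $\sigma_N^2 = N\big(s_A(c^*)(1-s_A(c^*)) + s_B(c^*)(1-s_B(c^*))\big)(1+o(1))$; writing $s_A(c^*)=s_B(c^*) = s^*/2$, this is $\sigma_N^2 = N\cdot s^*(1 - s^*/2)(1+o(1)) = N s^*(2-s^*)/2 \cdot(1+o(1))$. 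With $\alpha=2\beta$, $1-\beta/\alpha = 1/2$, so $\mu_N/\sigma_N \to (c^*)^{-1/\alpha}\sqrt{s^*}/\sqrt{s^*(2-s^*)/2} = (c^*)^{-1/\alpha}\sqrt{2/(2-s^*)}$. This does not match the stated $\Phi((c^*)^{-1/\alpha})$ unless $s^* = 1$ — which is exactly the case when \emph{everyone votes at} $c^*$, or rather when the nonvoting mass at costs above $c^*$ is negligible; I would recheck whether the intended convention makes $s^*=s_A(1)+s_B(1)=1$ coincide with the pivot value, or whether the variance should instead be computed with the pooled success probability as in a standard two-sample comparison. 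The main obstacle is precisely getting this constant right: reconciling my variance computation with the clean formula $\Phi((c^*)^{-1/\alpha})$, which likely requires either an additional modeling assumption I'm missing (e.g. that $s^*=1$, i.e. $c^*$ is such that all voters are active, or a different normalization of $V_A-V_B$) or a more careful treatment of the CLT with the exact variance; everything else is routine Taylor expansion plus a triangular-array CLT.
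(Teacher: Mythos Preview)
Your overall strategy---normal approximation for $V_A-V_B$, estimate $\mu_N/\sigma_N$ via the equilibrium relation $m(c_N)=(c_N)^{-1/\alpha}n(c_N)^{-\beta/\alpha}$, and read off the trichotomy from the sign of $\tfrac12-\beta/\alpha$---is exactly the paper's approach. The paper phrases it slightly differently (it conditions on the number of active voters and works with $V_A\sim\mathrm{Bin}(n(c_N),\tfrac{1+m(c_N)}{2})$ compared against $n(c_N)/2$), but the resulting $\mu/\sigma$ is the same.

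The discrepancy you flag in the constant for the threshold case is not a missing modeling assumption; it is a concrete error in your variance. The variables $V_A$ and $V_B$ are \emph{not} independent: each of the $N$ voters falls into exactly one of the three categories ``votes $A$'', ``votes $B$'', ``abstains'', so $(V_A,V_B,N-V_A-V_B)$ is multinomial with parameters $(N;\,s_A(c_N),s_B(c_N),1-s_A(c_N)-s_B(c_N))$ and hence $\mathrm{Cov}(V_A,V_B)=-N\,s_A(c_N)s_B(c_N)$. The correct variance is
\[
\mathrm{Var}(V_A-V_B)=N\bigl(s_A+s_B-(s_A-s_B)^2\bigr)\;\longrightarrow\; N s^*,
\]
since $s_A(c_N)-s_B(c_N)\to 0$. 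With this $\sigma_N$ and your $\mu_N=m(c_N)n(c_N)$ you get
\[
\frac{\mu_N}{\sigma_N}\;=\;\frac{m(c_N)\,n(c_N)}{\sqrt{N s^*}}\;\sim\;(c^*)^{-1/\alpha}(s^* N)^{1/2-\beta/\alpha},
\]
which equals $(c^*)^{-1/\alpha}$ exactly when $\alpha=2\beta$, with no condition on $s^*$. Your spurious factor $\sqrt{2/(2-s^*)}$ is precisely what the missing negative covariance removes.
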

That is, we get a phase transition at $\alpha=2\beta$, where above the threshold all non-trivial equilibria admit a jury theorem, below the threshold they admit a strong non-jury theorem, where both candidates are equally likely to win, and at the threshold (including that natural case of $\alpha =1, \beta=0.5$ mentioned above), we still have a non-jury theorem where the popular candidate only has some constant $<1$ win probability. Interestingly, this probability does not depend at all on the shape of the support functions, only on the position of the pivot point. 
\begin{proof}
    Let $\ol c=(c_N)_N$ be a nontrivial equilibrium with limit at pivot point $c^*<q$. 
    The random variable $V_A$ is a Binomial variable with parameters $p_A=0.5(1+m(c_N))$ and  $n=n(c_N)$, and $A$ wins if $V_A>n/2$.\footnote{To be completely accurate, we should sample $v_A$ from a Binomial distribution with a number of samples that is also a Binomial variable (see Footnote~\ref{fn:double_bin}), but the difference is negligible.}
    
    Since $n(c_N)=\Omega(N)$, for large enough $N$ we can approximate the distribution of $V_A-n/2$ with a Normal distributions with mean and variance as follows:
    \begin{align*}
        \mu_A &= 0.5m(c_N)n(c_N); \\
        \sigma_A^2 &= 0.5(1+m(c_N))0.5(1-m(c_N))n(c_N) \\
        &= 0.25(1-(m(c_N)^2)n(c_N)= 0.25(1-(m(c_N)^2)n(c_N);
    \end{align*}
    % In particular, their difference $V_A-V_B$ is (approximately) a Normal distribution with expected value 
    % $$\mu=\mu_A-\mu_B = 0.5m(c_N)n(c_N)$$ 
    % and variance 
    % $$\sigma^2=\sigma_A^2 + \sigma_B^2 = 0.5(1-(m(c_N)^2)n(c_N) \in  [0.4s^* \cdot N,0.5s^* \cdot N],$$
    % where $s^*:=s_A(c^*)+s_B(c^*)$ is a constant.

    % We can similarly bound the expectation $\mu=\mu_A-\mu_B$ from both sides:

    From Lemma~\ref{lemma:poly_margin}, there are some constants $\ul t,\ol t>0$ s.t. 
     
     $$\ul t\cdot N^{-\frac{\beta}{\alpha}}< m(c_N)<\ol t\cdot N^{-\frac{\beta}{\alpha}},$$
     and we know from Eq.~\eqref{eq:n_c} that
     $s^*N < n(c_N) < (s^*+0.1)N.$
     \begin{align*}
          \mu &>0.5 \ul t\cdot N^{-\frac{\beta}{\alpha}}s^*N =  0.5 s^*\ul t\cdot N^{1-\frac{\beta}{\alpha}}, \text{and} \\
        \mu &< 0.5 \ol t\cdot N^{-\frac{\beta}{\alpha}}(s^*+0.1)N =  0.5 (s^*+0.1)\ol t\cdot N^{1-\frac{\beta}{\alpha}}.
     \end{align*}
     We can similarly bound $\sigma^2_A$, as $m(c_N)^2$ is negligible, 
     $$0.24 s^* N < \sigma_A^2 < 0.25 (s^*+0.1)N.$$
    Using the upper and lower bounds on $\mu$ and $\sigma^2$, we get the following:
    \begin{align*}
       \frac{\mu}{\sigma}&< \frac{0.5 \ol t\cdot N^{1-\frac{\beta}{\alpha}}(s^*+0.1)}{\sqrt{0.24 s^* N}} < \ol r\cdot N^{0.5-\frac{\beta}{\alpha}} \\
       \frac{\mu}{\sigma}&> \frac{0.5\ul t\cdot N^{1-\frac{\beta}{\alpha}}s^*}{\sqrt{0.25( s^*+0.1) N}} > \ul r\cdot N^{0.5-\frac{\beta}{\alpha}},
    \end{align*}
    for some positive constants $\ul r, \ol r$.

    Finally, we can use the last bounds to estimate the probability of the popular candidate winning:
    \begin{align*}
        &\mathbb{WP}_A(I,N,c_N)\cong \Pr_{x\sim \calN(\mu,\sigma^2)}(x>0) = \Pr_{x\sim \calN(0,1)}(x<\frac{\mu}{\sigma}) \Rightarrow \\
       &\Phi(\ul r N^{0.5-\frac\beta\alpha}) <  \mathbb{WP}_A(I,N,c_N)
       < \Phi(\ol r N^{0.5-\frac\beta\alpha}),
    \end{align*}
    where $\Phi$ is the CDF of the Normal standard distribution function.\footnote{As the approximation becomes more accurate with $N$, the inequalities would hold without approximation for sufficiently high $N$ and the right constants $\ul r,\ol r$.}
    We can now easily distinguish three cases, with a sharp phase transition:
    \begin{description}
        \item[ 1. $\alpha>2\beta$.] In this case, where the sensitivity to the margin is much higher than the sensitivity to the number of voters, we have that $N^{\frac12-\frac\beta\alpha}\rightarrow \infty$, so 
        $$\mathbb{WP}_A(I,\ol c)=\lim_N\mathbb{WP}_A(I,N,c_N) = 1.$$
        That is, we get a jury theorem as the more popular candidate wins w.p. 1 at the limit. 
        \item[2. $\alpha<2\beta$.] When the sensitivity to the margin is smaller than the threshold,  $N^{\frac12-\frac\beta\alpha}\rightarrow 0$, so both candidates win with equal probability at the limit. 
        \item[ 3. $\alpha=2\beta$.] This includes the $\alpha=1,\beta=\frac12$ case mentioned in the preliminaries. Here $N^{\frac12-\frac\beta\alpha}=N^0=1$, and thus $A$ wins with some probability that is bounded between two constants $[\Phi(\ul r), \Phi(\ol r)]$.
    \end{description}
    It is not hard to see that we can tighten the gap between $\ul r,\ol r$, so in the limit they both equal $r=(s^*)^{0.5-\frac\beta\alpha}\cdot(c^*)^{-\frac1\alpha}$; meaning that in the threshold case, $r=(c^*)^{-\frac1\alpha}$. In particular, this means that the winning probability is completely independent of the shape of the support functions, and only depends on the position of the pivot point $c^*$. Higher $c^*$ means a larger number of active voters and higher winning probability of the popular candidate, in the limit. 
    \end{proof}

    \rmr{For the example with the linear supports and $\alpha=1$ we indeed get $\mathbb{WP}_A = \Phi(1/0.6) = 0.9522$}

%A non-trivial equilibrium are those equilibrium points which converge to the pivot point as $N \rightarrow \infty$. For $s_A(0) = s_B(0)$ when both these definitions coincide and we use $c^* =0$ as trivial equilibria.  % As shown in further sections non-trivial equilibriums are characterized by a \emph{threshold} cost $c^* >0$ for  which   $s_A(c^*) = s_B(c^*)$ for all $N$. \rmr{no need for this here: } It is worth noting that fully rational models do not induce non-trivial equilibria. 
%In the next section, we present our main result. %We will also check if they induce any equilibria beyond the trivial one, and study the winning probabilities in such non-trivial equilibria.
%\gan{
%\begin{itemize}
 %   \item Talk about perception (information) of voters 
 %   \item rational action according to perception of voting 
  %  \item boundedly rational because perception might not reflect the  "objective" pivot probability 
%\end{itemize}}

% \rmr{4) "Boundedly Rational Example"; 5) "A non-Jury theorem for boundedly rational voters" (linear support but don't have to put it in the title...); 6) "simulations" (with various supports and PPMs). Also you can remove 2.5 but make sure all definitions are covered somewhere}
%Note that this assumption is without loss of generality. If any voter do not   %We override the notation $s_A$ to denote $s_{A}(1)$.       %Let $V_A$ and $V_B$ be the votes obtained by candidates $A$ and $B$ respectively and denote by $\P(V_A > V_B | n)$, the probability  candidate $A$ wins with an election with $n$ voters. 
 
 \if 0
The expected margin is voters assessment of how robust the winner's position is in terms of vote share. The larger the margin, higher is the vote share of the winner.  Expected margin is given by \begin{equation}
    m(c):=\frac{|s_A(c)-s_B(c)|}{s_A(c)+s_B(c)}.
\end{equation} 
 Given a voter population $N$ and a threshold cost $c$, the expected number of total votes is given by \begin{equation}
     n(c) := N (s_A(c) + s_B(c)). 
 \end{equation}
Furthermore, the probability that a random voter votes for candidate $A$ is given as 
\begin{equation}
    \mathbbm{P}(A) = \frac{s_A(c)}{s_A(c) + s_B(c)}
\end{equation}

We remark here that  the parameters $m(c), n(c)$ and $\mathbbm{P}(A) $ are voters assessment of expected margin of victory, voting population and fraction of votes for candidate $A$ respectively; i.e. these are not the  parameters obtained after the voting process. However, given the knowledge of cost distribution, these are unbiased estimates of the true parameters.   Furthermore,  since the cost function is common knowledge each voter has same estimate for all these parameters. 
\fi

 \section{Stability of  Equilibrium Points}
 \label{sec:stability}
 In this section we show that the right side equilibrium i.e. $c^+$ is stable by  showing that when the perceived pivotality is different across voter types, a constant fraction of voters have an incentive  to participate (or abstain) such that the equilibrium $c^+$ is restored. Note that $c^0$ is clearly a stable  equilibrium since the margin is a near-constant. 
 %Let $c_A$ and $c_B$ be costs such that the perceived pivotality is determined by $p := p(c_A, c_B) = p(m(c_A, c_B), n(c_A, c_B))$ where parameters $n$ and $m$ are computed using respective thresholds $c_A$ and $c_B$. For instance,  $n(c_A, c_B) = (s_A(c_A) + s_B(c_B))N$.   
\begin{restatable}{proposition}{ProbOne}
Let $c_A \geq \widehat c$ and $c_B \geq \widehat c$ be cost thresholds for voters of type $A$ and $B$ respectively, and perceived pivotality $p$ is given as $p = p(c_A, c_B) = p(m(c_A, c_B), n(c_A, c_B))$. Then for any voting instance $I$ with $N$ voters, the best response of voters  is such that  $c_A = c^+$ and $c_B = c^+$.   
\end{restatable}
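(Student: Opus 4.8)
The plan is to reduce the two‑type best‑response problem to the one‑dimensional fixed‑point condition already analysed in the paper, and then to locate that fixed point. First I would note that the perceived pivotality $p=p(n(c_A,c_B),m(c_A,c_B))$ is a single number shared by \emph{all} voters — it depends on the aggregate quantities $n,m$ only, not on a voter's type. Hence in her best response a voter of either type participates exactly when her effective cost is at most this common $p$. Consequently, whatever the current thresholds $(c_A,c_B)$ are, the best‑response profile uses the \emph{same} threshold $c'=p(n(c_A,c_B),m(c_A,c_B))$ for both types; in particular, any best‑response equilibrium is necessarily symmetric, $c_A=c_B=:c$. For such a symmetric profile, $n(c_A,c_B)=(s_A(c)+s_B(c))N=n(c)$ and $m(c_A,c_B)=|s_A(c)-s_B(c)|/(s_A(c)+s_B(c))=m(c)$, so the equilibrium condition collapses to $c=p(n(c),m(c))=:f(c)$, i.e. $c\in C^*(I,N)$ by the Election‑Equilibrium Proposition. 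Thus the best responses in question are exactly the election equilibria lying in $[\widehat c,1]$.

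Next I would show $f$ has a unique fixed point in $[\widehat c,1]$, namely $c^+$. Here $\widehat c$ is the relevant (largest) pivot point $c^*<q$, so $m(\widehat c)=0$, the sign of $s_A-s_B$ is fixed on $(\widehat c,1]$, and $m'(\widehat c)>0$; by boundedness of the derivatives there is $t>0$ on which $n$ and $m$ are non‑decreasing and $m$ is strictly increasing on $[\widehat c,\widehat c+t]$. At the left end, $f(\widehat c)=p(n(\widehat c),0)\ge p(N,0)\ge q>\widehat c$ by $q$‑tie‑sensitivity, so $f-\mathrm{id}>0$ there. On $[\widehat c,\widehat c+t]$ the map $f=p(n(\cdot),m(\cdot))$ is strictly decreasing (since $p$ is non‑increasing in each argument and $m$ strictly increases), hence $f-\mathrm{id}$ is strictly decreasing and has at most one zero there; for $N$ large this zero is precisely $c^+$, because by Lemma~\ref{lemma:poly_margin} (or, more generally, weak vanishing pivotality) $c^+=\widehat c+\Theta(N^{-\beta/\alpha})$ falls inside $[\widehat c,\widehat c+t]$. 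For $c>\widehat c+t$ we have $n(c)\ge n(\widehat c+t)=\Omega(N)$ while $m(c)$ is bounded below by a positive constant, so weak vanishing pivotality forces $f(c)\to 0<c$; hence no fixed point lies there for $N$ large. This yields a unique fixed point $c^+$ on $[\widehat c,1]$, so the only best‑response equilibrium with $c_A,c_B\ge\widehat c$ is $c_A=c_B=c^+$; and since on $[\widehat c,\widehat c+t]$ we have $f>\mathrm{id}$ below $c^+$ and $f<\mathrm{id}$ above it, a profile perturbed away from $c^+$ (in particular an asymmetric one) is corrected in one round of best response — a constant fraction of voters, those whose cost lies between $c'$ and the old threshold, switch action — and subsequent responses converge to $c^+$, giving the claimed stability.

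The routine parts are the reduction (first paragraph) and the left‑endpoint inequality; the main obstacle is ruling out additional fixed points of $f$ on $[\widehat c,1]$. The delicacy is that $m(c)$ need not be globally monotone on $(\widehat c,1]$, so one cannot argue monotonicity of $f$ on the whole interval; instead one must exploit that for large $N$ the fixed point is squeezed into the small neighbourhood $[\widehat c,\widehat c+t]$ where $m$ \emph{is} monotone (via Lemma~\ref{lemma:poly_margin}, resp. $c^+\to\widehat c$), while separately killing off far‑away fixed points using the vanishing of $p$ whenever $n=\Omega(N)$ and $m$ is bounded away from $0$. If one wants the statement for all $N$ rather than only $N$ sufficiently large, one additionally needs $\widehat c$ to be the largest pivot point (otherwise another pivot point in $(\widehat c,1)$ would contribute its own right equilibrium), which I would state as a standing assumption of the section.
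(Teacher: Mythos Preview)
Your approach is correct but genuinely different from the paper's. The paper argues stability directly and dynamically: it restricts attention to profiles with the same number of active voters as the equilibrium (so that $c_A>c^+>c_B$ or vice versa), assigns \emph{distinct} perceived pivotalities $p,p'$ to the two types, and checks case by case that voters whose cost lies between their current threshold and $c^+$ have the incentive to switch action, pushing both thresholds toward $c^+$. You instead exploit the proposition's hypothesis that the pivotality $p(c_A,c_B)$ is a \emph{single} number shared by all voters: this immediately forces any best-response profile to be symmetric ($c_A=c_B$), collapsing the problem to the one-dimensional fixed-point equation $c=p(n(c),m(c))$ on $[\widehat c,1]$, whose unique solution you identify as $c^+$ via tie-sensitivity at the left endpoint and weak vanishing pivotality away from $\widehat c$. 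Your reduction is cleaner and actually more faithful to the stated hypothesis of a common $p$; the paper's use of two different pivotalities $p,p'$ is not obviously consistent with that hypothesis, and its restriction to the curve $n=n(c^+,c^+)$ leaves other starting configurations unaddressed. The price you pay is that your uniqueness argument needs $N$ large (so that $c^+$ falls into the monotone neighbourhood of $\widehat c$ and far-away pivotalities vanish), and needs $\widehat c$ to be the rightmost pivot point --- both of which you correctly flag; the paper's statement nominally claims the conclusion for every $N$, but its proof does not really secure that either.
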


%\rmr{write as a proposition. the full proof can be in the appendix and we leave intuition here, with one figure.  
%I think the proposition is that at any point $(c_A \geq \hat c, c_B\geq \hat c)$, for any subset of voters, their best response will bring us strictly closer to $(c_A \geq \hat c, c_B\geq \hat c)$.    }

\begin{figure}
\centering
\begin{tikzpicture}
\begin{axis}[
    xlabel={$c_A$},
    ylabel={$c_B$},
    xmin=0, xmax=100,
    ymin=0, ymax=100,
    legend pos=north west,
     axis y line=left,
      axis x line=bottom,
     ticks=none
]

      \draw[fill=blue!20!white, draw=none] plot[smooth,samples=100,domain=0:0] (0,100)  -- plot[smooth,samples=100,domain=0:50] (50,100) -- plot[smooth,samples=100,domain=50:50] (50,50) -- plot[smooth,samples=100,domain=50:30] (30,0) -- plot[smooth,samples=100,domain=30:0] (0,0);

            \draw[fill=blue!20!white, draw=none] plot[smooth,samples=100,domain=50:50] (\x,\x)  -- 
            plot[smooth,samples=100,domain=50:70] (70,100)  -- 
            plot[smooth,samples=100,domain=70:100] (100,100)  -- 
            plot[smooth,samples=100,domain=100:100] (100,0) -- plot[smooth,samples=100,domain=100:50] (50,0) ;

    \node at (50,45) {$c^+$};
    
    \node at (50,50)[scale = 2.5] {$\star$};
    
    %\node at (50,40) {$c^+$};
     %   \node at (50,50) {$\Bigg \star$};
    %\node at (50,40) {$c^+$};
 
    \node at (20, 20) {$ \circ$};
       \node at (20,15) {$\widehat c$};
%    \node at (40, 40) {$ \circ$};
 %   \addplot[color=blue,   dashed] coordinates{(0.4,0.4)  (0.50,0.50)};
    \draw[dotted] (0.07,0.25) -- (0.7,2);
   \addplot[color=red] coordinates {
    (0,0)(1,1)
    };
    \addplot[color=blue, dashed] coordinates {
    (0.3,0)(0.7,1)
    };

  \node at (30,50) [module right arrow, color=red, scale=0.65] {$  \  \ A \   $};
  \node at (70,50) [module left arrow, color=red, scale=0.65] {$\ A \ \ $};

 \node at (39,18) [rotate= 65] {$p= p^+$};

 \node at (30,18) [rotate= 65] {$p > p^+$};

 \node at (45,18) [rotate= 65] {$p < p^+$};
     \end{axis}
\end{tikzpicture}
    \caption{A schematic illustation of stability of the equilibrium point $c^+$. When $c' > c^+$ is an equilibrium estimate, the $A$ supporters from  right shaded region are incentivized to abstain whereas under $c"<c^+$ the $A$ supporters from left shaded region are incetivized to participate.}

    \vspace{15pt }
    
%\rmr{the $p=p^+$ and $p=c_A$ are different lines. The dashed line is the $p=p^+$ line. the $p=c_A$ line splits the white regions (it is the absorbing line for A).  }   }
    \label{fig:stabilityOne} 
\end{figure}
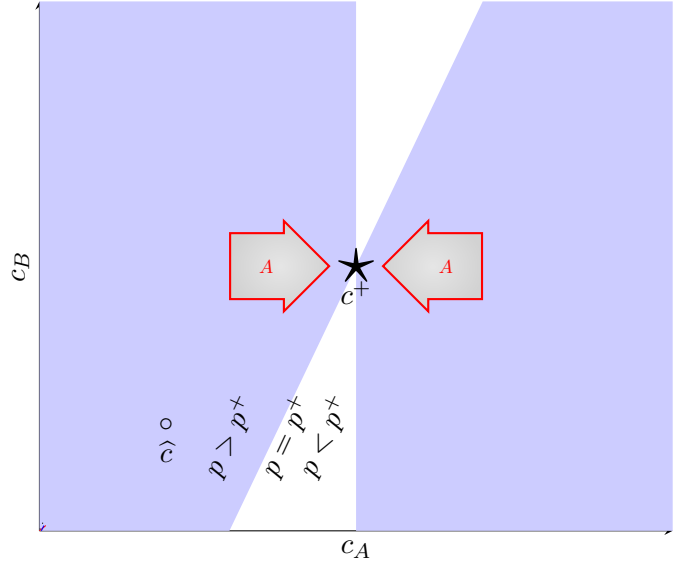

 We provide the intuition of the proof.  The detailed proof is given in Appendix. Notice in equilibrium  both types have the same cost  threshold and hence the equilibrium points must be on the line  $c_A= c_B$ as shown in Figure~\ref{fig:stabilityOne}. Suppose first that  $c_A > c_B$. In this case,  the   supporters of $A$ shown  in the shaded area (in Figure \ref{fig:stabilityOne}) to the right of $c^+$ are less pivotal and hence are   incentivized to abstain.  Similarly,  when $c_A <  c_B$,  the $A$ voters in the shaded region on the left side of $c^+$ are more pivotal and have an incentive to participate. The $B$ supporters on the other hand will participate in the first case and abstain  in the second case.  Thus, the population of participating voters adjusts itself (i.e. $c_A \downarrow c^+$ and $s_B \uparrow c^+$ when $c_A > c_B$ and $c_A \uparrow c^+$ and $s_B \downarrow c^+$ when $c_A<c_B$) such that the equilibrium $c^+$ is restored.

The left-side equilibrium, on the other, hand may not always be stable. Consider $c" < c^-$ be an equilibrium with cost thresholds $c_A$ and $c_B$ with $c_A <  c_B$. Then,  the supporters of $B$ are  incentivized to participate as, with additional participation from supporters of $B$, the margin $m$ increases, increasing  the probability of win for $B$. The supporters of $A$, on the other hand  are also incentivized to participate. An additional participation from $A$ supporters would mean that $m$ decreases. Hence stability of $c^-$ depends on the relative increase in the participation from each type of agent. 

%the decision to abstain depends on population size $N$, and support functions  $s_A$ and $s_B$. For a large value of $N$, the increase in $n$ due to participation dominates the decrease in $m$ and hence $A$ supporters will not participate. In conclusion, $A$ supporters with costs $>c_A$ will not vote and all $B$ supporters with cost $< \widehat c $ will vote, hence the equilibrium $c^-$ is not stable. A detailed  analysis is given in the Appendix. 

The emergence of an unstable equilibrium between two stable ones (one of which is trivial) also occurs e.g. in markets with positive externalities~\cite{katz1985network}. In our case externalities behave non-monotonically (positive under $\hat c$ and negative above but the results are similar.

\section{Inducing Unbiased Participation} \label{sec:MD}

\rmr{added this:}
In the previous section we saw that allowing everyone to participate may introduce multiple equilibria, including some where the less-popular candidate wins with higher probability, and where the probability of the popular candidate to win is bounded regardless. 

We next explore a different idea: instead of allowing everyone to vote, we sample a sufficiently small group of voters (odd $n$), such that they are all guaranteed to perceive themselves as pivotal, and hence all of them vote. The result is unbiased voting, but with fewer voters.  A small sample size  also guarantees unique, stable equilibrium that favours the popular candidate. %On the other hand, the winning probability of the popular candidate decreases   when   there are fewer active voters.   

%In each round $i$, a randomly selected $N_i$ voters vote.  If any candidate secures victory with a certain margin, the winner is declared, and the voting process concludes. However, if no candidate wins in round $i$, the election mechanism proceeds to the next round.

 %Let $N_i \subseteq N$ voters are chosen randomly to participate in round $i$ of election. For simplicity, we consider that $N_i$ is selected such that every voter votes in equilibrium.  

 Denote by $\varepsilon:=s_A(1)-s_B(1)$, the expected margin under full  participation, so 
 $s_A(1) =  ( 1+ \varepsilon)/2  $ then $n \leq 1/\varepsilon^2$. We then need to select sufficiently small $n$ so that $p(n,\eps)\geq 1$. In the fully rational models, this can only occur for a single active voter. However under our tie-sensitive model we  have 
 \begin{align}
1 \leq  \ppm (n, \varepsilon) & = \frac{1}{\varepsilon \sqrt{n}} 
 %\frac{1}{m(c)\sqrt{n(c)}}   \ 
% \implies  N_1 &\leq \frac{s_A(c) + s_B(c))} {(s_A(c) - s_B(c))^2} \\ 
 %\iff n &\leq \frac{1} {  \varepsilon^2}  
 \end{align}
 so any $n \leq  1/\varepsilon^2$ will guarantee participation of the entire sample. % which maximizes the popular candidates win probability.  
 The perceived margin   provides sufficient information to bound the winning probability of popular candidate; independent of the shapes of support functions. 
 
 We now first consider a single round.
This already provides us with a uniform lower bound of 0.84134 on the win probability of A when the margin $\varepsilon$ goes to 0. \rmr{removed due to even $n$: and is only higher for larger margin.}  The number 0.84134 is not coincidental but is the probability that a standard Normal random variable does not exceed a single standard deviation (See figure \ref{fig:five}). Of course, the  number is  sensitive to the parameters of the exact PPM we use (as are the results in the previous sections), but not to the  shapes of the support functions. 

For larger margin, the win probability of A may be either higher or lower, and this depends mainly on the parity of $1/\eps^2$ (where for odd numbers it is always higher). 

Still, we would like to further improve the winning probability. One direction (which we do not explore in this work) is to sample a higher number of voters in an attempt to balance bias and quantity. Instead, we will explore the possibility of allowing multiple rounds.

% The effectiveness of multi-round voting in improving the winning probability of the popular candidate is not  immediately evident. On one hand a small sample size encourages high voter participation,  whereas on the other hand, the winning probability of the popular  candidate is small as only a small fraction of voters vote in any given  round. 

% Of course the exact bounds strongly depend on our selection of pivot model, so we can try and understand how it is affected e.g. according to the parameters $\alpha,\beta,\gamma$ in $p(n,m)=\gamma m^{-\alpha} n^{-\beta}$.

\begin{figure}
    \centering
\begin{tikzpicture}
  \message{Cumulative probability^^J}
  
  \def\B{11};
  \def\Bs{3.0};
  \def\xmax{\B+3.2*\Bs};
  \def\ymin{{-0.1*gauss(\B,\B,\Bs)}};
  \def\h{0.07*gauss(\B,\B,\Bs)};
  \def\a{\B-0.8*\Bs};
  
  \begin{axis}[every axis plot post/.append style={
               mark=none,domain={-0.05*(\xmax)}:{1.08*\xmax},samples=\N,smooth},
               xmin={-0.1*(\xmax)}, xmax=\xmax,
               ymin=\ymin, ymax={1.1*gauss(\B,\B,\Bs)},
               axis lines=middle,
               axis line style=thick,
               enlargelimits=upper, % extend the axes a bit to the right and top
               ticks=none,
               xlabel=$\# \ A \ votes $,
                ylabel=$\mathbb{P}(vote \ A) $,
               every axis x label/.style={at={(current axis.right of origin)},anchor=north},
               every axis y label/.style={at={(0,0.8)},anchor=north, rotate=90},
               width=0.7*\textwidth, height=0.55*\textwidth,
               y=700pt,
               clip=false
              ]
    
    % PLOTS
    \addplot[blue,thick,name path=B] {gauss(x,\B,\Bs)};
    
    % FILL
    \path[name path=xaxis]
      (0,0) -- (\pgfkeysvalueof{/pgfplots/xmax},0);
    \addplot[blue!25] fill between[of=xaxis and B, soft clip={domain=-1:{\a + 3.5}}];
    
    % LINES
%    \addplot[mydarkblue,dashed]
 %     coordinates {({\a -1.6},{1.2*gauss(\a,\B,\Bs )}) ({\a -1.6},{-\h })}
  %    node[mydarkblue,below=-2pt] {\tiny{$n(c^+)/2$}};

          \addplot[mydarkblue,dashed]
      coordinates {({\a +2.2},0.15) ({\a + 2.2},{-\h + 0.01 })}
      node[mydarkblue,below=5pt] {\tiny $\mu$};

   % \addplot[mydarkblue,dashed]
    %  coordinates {({\a -4.25},0.15) ({\a -4.25},{-\h + 0.01})}
    %  node[mydarkblue,below= 1pt] { \tiny $\mu - (c+1) \delta $};

    \addplot[mydarkblue,dashed]
      coordinates {({\a -0.25},0.15) ({\a -0.25},{-\h + 0.01})}
      node[mydarkblue,below= 1pt] { \tiny $\frac{n}{2} $};

%    \addplot[mydarkblue,dashed]
 %     coordinates {({\a + 4},0.15) ({\a  + 4},{-\h + 0.01})}
  %    node[mydarkblue,below= 3pt] { \tiny $ \hspace{25pt}\mu + (c-1)\delta$};
  
  \path[name path=xaxis]
      (0,0) -- (\pgfkeysvalueof{/pgfplots/xmax},0);
    \addplot[red!25] fill between[of=xaxis and B, soft clip={domain={\a +0.7}: {\a + 12}}];

      \path[name path=xaxis]
      (0,0) -- (\pgfkeysvalueof{/pgfplots/xmax},0);
    \addplot[orange!25] fill between[of=xaxis and B, soft clip={domain={\a - 1.3}: {\a + 0.7}}];

%  \addplot[mydarkblue,dashed]
 %     coordinates {({\a + 1.3},0.15) ({\a + 1.3},{-\h + 0.01 })}
  %    node[mydarkblue,below=1pt] {\tiny{$\mu^-$}};

%  \addplot[mydarkblue,dashed]
 %     coordinates {({\a + 3.1},0.15) ({\a +3.1},{-\h + 0.01 })}
  %    node[mydarkblue,below=1pt] {\tiny{$\mu^+$}};
    \node[black,above left] at ({1+ 0.6*(\a)},0.02) {
    $ \tiny \mathbb{P}(B \ wins)$};
        \node[black,above right] at ({1+ 1.7*(\a)},0.02) {
    $\tiny \mathbb{P}(A \ wins)$};
        \node[black,above right] at ({1 -  0.1*(\a)},0.07) {
    $\tiny \mathbb{P}(no \ result)$};
        \node[red,above right] at ( {\B - 2.1}, 0.08) {\tiny{$ \delta $}};
  %   \node[red,above right] at ( {\B - 3.8}, 0.145) {\tiny{$ 2c\delta $}};
   %       \node[red,above right] at ( {\B - 2.8}, 0.08) {\tiny{$ \delta $}};
%     \node[red,above right] at ( {\B - 5.5}, 0.1) {\tiny{$ c\delta $}};
         % \node[red,above right] at ( {\B - 1.5}, 0.1) {\tiny{$\hspace{5pt}(c-1)\delta $}};
        
%    \addplot[<->, black,line width=0.3pt]
 %     coordinates {(\B - 6.5 , 0.1) (\B  -  2.8, 0.1)};
%          \addplot[<->, black,line width=0.3pt]
 %     coordinates {(\B + 1.5 , 0.1) (\B  -  0.2, 0.1)};
  
    \addplot[<->, black,line width=0.3pt]
      coordinates {(\B - 2.5 , 0.08) (\B - 0.2, 0.08)};
         \iffalse 

    \addplot[<->, black,line width=0.3pt]
      coordinates {(\B - 6.5 , 0.145) (\B + 1.5, 0.145)};
      \fi 
       \addplot[->, black,line width=0.3pt]
      coordinates {(3.5 * sin(40), 0.022) (3.5, 0.004 )};
             \addplot[->, black,line width=0.3pt]
      coordinates {(22 * sin(120), 0.022) (17, 0.005 )};
       \addplot[->, black,line width=0.3pt]
      coordinates {(2 * sin(60), 0.07) (8, 0.05 )};

     % 1.5 * sin(x r)
  \end{axis}
\end{tikzpicture}

    \caption{The figure illustrates  win probability of the popular candidate in any given intermediate round.  For majority,    $\Pr(A \ wins) = 0.84134$ and similarly for supermajority  with margin $ m \times 0.3$ (shown by orange color) it is $0.7580$ (the red area). \rmr{in the figure the orange area looks more like half the margin} }
    \vspace{15 pt}
    \label{fig:five}
\end{figure}
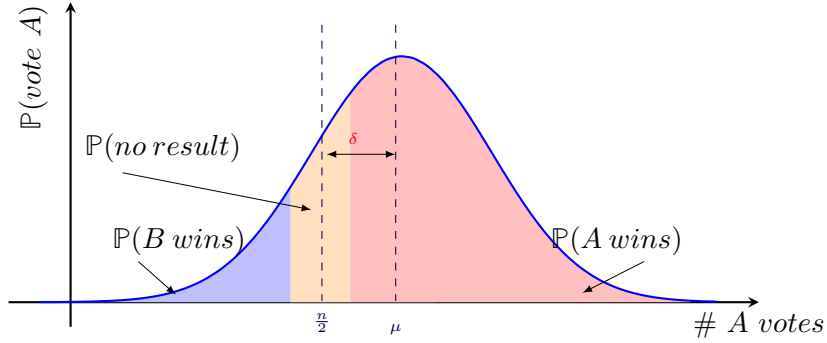

\paragraph{Multi-round voting}

Note that if we require a supermajority instead of a majority, the winning probability of A decreases, but not as much as the winning probability of B (see Fig \ref{fig:five}). So in the extreme case, we could just try again and again until we get a unanimous vote.  However such an extreme approach may substantially alter voters' behavior, which do not even know the number of rounds. 

We can thus limit voting to at most two rounds. Given an expected margin $\eps$:
\begin{itemize}
    \item Set a supermajority threshold $\eps'  < \eps$;
    \item sample $n=n(\eps)$ voters;
    \item If the first round ends with a margin at least $\eps'$, the winner is determined and there is no second round;
    \item Otherwise, we run a second round with $n'=n(\eps')$ voters and simple majority.
\end{itemize}
Note that the chances of the unpopular candidate B winning in either round are small, as the first round require a supermajority, and the second round, if occurs, has $n'>n$ voters.

\section{Numerical Study with an Example}
\label{sec:experiments}
In this section, we extend the theoretical results of the previous section with the help of an example for different model parameters, numerically. 

 \begin{restatable}{example}{exOne}
Suppose $s_A(c)=0.1+c/2$ and $s_B(c)=0.4$. That is, $B$ has 40\% overall support, all of them core supporters and $A$ has 60\% overall support and a fraction of  voters are distributed  over cost range $[0,1]$. 
\label{ex:first}
\end{restatable}
 Figure \ref{fig:fourth} shows the win probability of candidate $A$ under $c^+$ and  for a fixed population size of $10^6$ for different values of $\alpha$. The larger value of $\alpha$ pushes the equilibrium   point $c^+$ towards the right, increasing the win probability of the popular candidate. However, as shown in the blowup box, this probability is still bounded away from 1. % implying that the non-Jury result holds even under weaker dependency on $\alpha$; perhaps with a different upper bound on the win probability. 

% It is easy to see that there is a single pivot point  at $\widehat c=0.6$.  %We observe in Figure~\ref{fig:third} that $A$'s win probability is even lower than the theoretical bound. Clearly, when $\widehat c \rightarrow 1$, then $\Pr(A \ wins) \rightarrow 0.5$, as almost the entire population participates in voting with both the agents having equal support. Hence, our theoretical upper bound (of around 0.84) is  loose for larger values of $\widehat c$. This upper bound is almost tight for the lower values of $\widehat c$. For the election instance given in  Example~\ref{ex:first} the upper bound is 0.977 whereas the win probability with $N=10^{10}$ is 0.954. %The bound is further more tight for smaller values of $\widehat c$. 

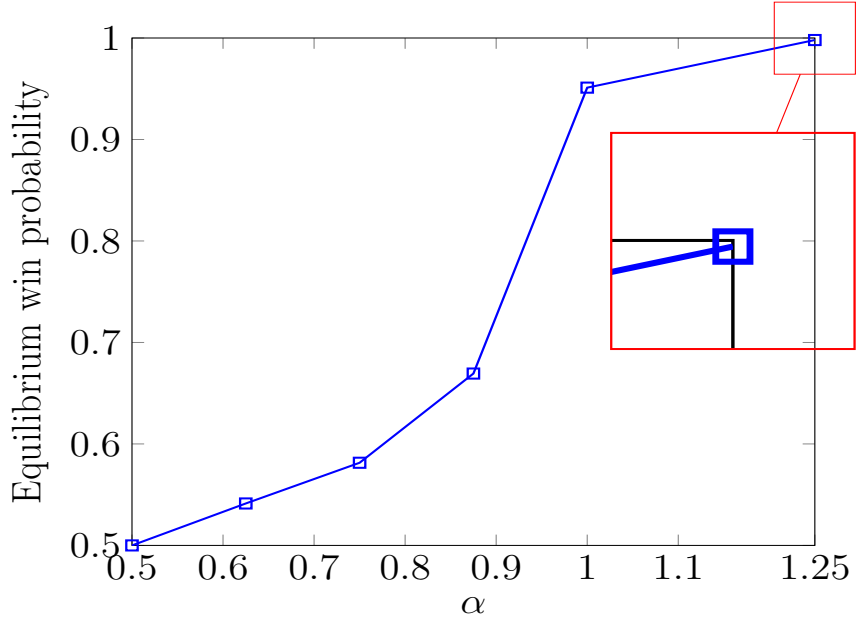
\begin{figure}[ht!]
\centering 
 \resizebox{0.7\columnwidth}{0.5 \columnwidth}{
\begin{tikzpicture}
    % Because we need to give the spy node a name to add the labels afterwards,
    % it is a bit more complicate than usual. To do so we need to `scope` the
    % spy. To avoid further error messages it seems we need to `scope` the whole
    % `axis` environment.
    \begin{scope}[
        % Give the spy options to the `scope`
        spy using outlines={
            rectangle,
            magnification=3,
            connect spies,
            size=3cm,
            blue,
        },
    ]
        \begin{axis}[
            ylabel={Equilibrium win probability},
            xlabel={$\alpha$},
            xmin=0.5, xmax=1.25,
             ymin=0.5, ymax=1,
             label style={font=\Large},
                    tick label style={font=\Large}, 
            % (simplified this statement)
            xtick={0.5,0.6,0.7,0.8,0.9,1, 1.1, 1.25},
            % (removed all unnecessary/unrelated stuff)
        ]

%\begin{axis}[
%    title={Temperature dependence of CuSO\(_4\cdot\)5H\(_2\)O solubility},
 %   xlabel={$\alpha$},
  %  ylabel={Equilibrium win probability},
  %  xmin=0.5, xmax=1.5,
  %  ymin=0.5, ymax=1,
  %  xtick={0.5,0.6,0.7,0.8,0.9,1, 1.1, 1.2, 1.3, 1.4},
  %  ytick={0,0.5,0.6,0.7,0.8,0.9},
  %  legend pos=north west,
  %  ymajorgrids=true,
  %  grid style=dashed,
%]

\addplot[ thick,
    color=blue,
    mark=square
    ]
            % (simplified the plot by removing a lot of coordinates and adding
            %  `smooth` to the options
%    x = [0.5, 0.625, 0.75, 0.875, 1, 1.25, 1.5, 1.75]
%y1 = [0.5, 0.541341, 0.58145,0.66935,0.95108, 1,1,1]
    coordinates {
    (0.5,0.5)(0.625,0.541341) (0.75,0.58145) (0.875,0.66935) (1,0.95108) (1.25,0.998) 
%, 1,1,1)
    };
            % crate a coordinate of the point you want to magnify
            \coordinate (point) at (axis cs:1.25,9.9983e-1);
     %       \PointX = 1.25
      %      \PointY = 0.99963
%            \Getxycoords{point}{\PointX}{\PointY}
            % Get the coordinates of that point (to later use them)
%            \Getxycoords{point}{\PointX}{\PointY}

            % draw the dashed lines to the axis (using the defined coordinate)
%            \draw [red,dashed]
 %               (point -| {axis cs:\pgfkeysvalueof{/pgfplots/xmin},0})
  %                  -| ({axis cs:0,\pgfkeysvalueof{/pgfplots/ymin}} -| point);

            % unfortunately one cannot directly place the spy at an
            % axis coordinate, thus we define a `\coordinate` first
                \coordinate (spy point) at (axis cs:1.16,0.8);
                %\Getxycoords{point}{\PointX}{\PointY}
            \spy[color=red] on (point) in node (spy) at (spy point);
        \end{axis}
    \end{scope}

    % add the labels below the spy node
%    \node [anchor=north] at (spy.south) {%
 %       $x = \pgfmathprintnumber{\PointX}$,
  %      $y = \pgfmathprintnumber{\PointY}$%
  %  };

\end{tikzpicture}}
\vspace{-3mm}
    \caption{Win probability of candidate $A$ under different values of $\alpha$ in $p(n,m) = \min( 1,  \frac{1}{m^{\alpha}\sqrt{N}})$. The blowup shows that win probability is still below 1 for $\alpha=1.25$.\vspace{-3mm} }
    \label{fig:fourth}
    \end{figure}
    
\rmr{Suppose in the example of Fig 1, we fix $c^+=0.6+\eps$. How large $N$ needs to be so that $c^+$ is an equilibrium, for $\alpha=0.75, \alpha=1$ and $\alpha=1.25$? conjecture: for $\alpha<1$ we have $N=O(f(1/\eps))$ and for $\alpha>1$ we have $N=\Omega(g(1/\eps))$ where $g\gg f$ e.g. exponential vs. polynomial.  phase transition at $\alpha=1$.}

In Figure \ref{fig:win_per_size}, we observe that the  win probabilities under $c^+$ and $c^-$ for respective winning candidates are significantly different. While $B$ wins with a high probability under $c^-$, $A$ wins under $c^+$ with a high probability for the same population size. Hence  $B$ would always prefer a smaller fraction of the   population to vote whereas the popular candidate prefers a large  population to vote. Furthermore, for  a fixed population size, the win probabilities of these candidates under their favoured equilibria are not the same.

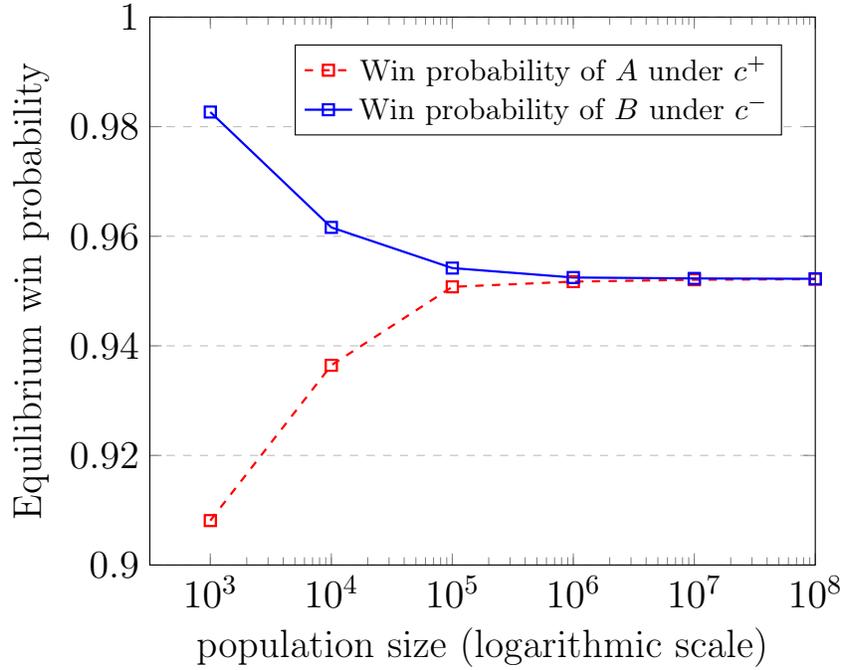
\begin{figure}[ht!]
\centering 
 \resizebox{0.7\columnwidth}{0.55 \columnwidth}{
\begin{tikzpicture}
\begin{axis}[
 xmode=log,
    log ticks with fixed point,
    xlabel={population size (logarithmic scale)},
    ylabel={Equilibrium win probability},
    xmin=0, xmax=10^8,
    ymin=0.9, ymax=1,
    label style={font=\Large},
                    tick label style={font=\Large}, 
    xtick={10^3, 10^4, 10^5, 10^6,10^7,10^8},
    xticklabels = {$10^3$,$10^4$, $10^5$, $10^{6}$,$10^7$,$10^8$},
    ytick={0.88,0.9,0.92,0.94,0.96,0.98,1},
    legend pos=north west,
    ymajorgrids=true,
    grid style=dashed,
legend style={
at={(0,0)},
anchor=north east,at={(axis description cs:0.95,0.95)}}]
\addplot[ thick, 
    color=red,
    mark=square, mark options={solid},  dashed
    ]
%    x = [0.5, 0.625, 0.75, 0.875, 1, 1.25, 1.5, 1.75]
%y1 = [0.5, 0.541341, 0.58145,0.66935,0.95108, 1,1,1]
    coordinates {
%    x = []
%y1 = [0.9329, 0.93965, 0.95108, 0.95191, 0.95323]
%y2 = [0.97609, 0.96094, 0.95339, 0.95325, 0.95324]
%0.82244797  0.90815627  0.93816472  0.95092435  0.95176782  0.95207537
 % 0.95216601
% 0.83226153  0.90810929  0.93644931  0.95076779  0.95171826  0.95206293
 % 0.95216058
  
   (10^3, 0.90810929)(10^4, 0.93644931)(10^5, 0.95076779)(10^6, 0.95171826) (10^7, 0.95206293)(10^8, 0.95216058)
  
%, 1,1,1)
    };
\addplot[ thick,
    color=blue,
    mark options={solid},
    mark=square
    ]
%    x = [0.5, 0.625, 0.75, 0.875, 1, 1.25, 1.5, 1.75]
%y1 = [0.5, 0.541341, 0.58145,0.66935,0.95108, 1,1,1]
    coordinates {
%    x = []
%y1 = [0.9329, 0.93965, 0.95108, 0.95191, 0.95323]
%y2 = [0.97609, 0.96094, 0.95339, 0.95325, 0.95324]
%0.98266767  0.96161482  0.95419222  0.95247038  0.95230315
%  0.95223574
    (10^3, 0.98266767)(10^4, 0.96161482)(10^5, 0.95419222)(10^6,  0.95247038)(10^7,  0.95230315)(10^8,0.95223574) 
%, 1,1,1)
    };
\legend{Win probability of $A$ under $c^{+}$,Win probability of $B$ under $c^-$}
%    \legend{CuSO\(_4\cdot\)5H\(_2\)O}
\end{axis}
\end{tikzpicture}
}
\caption{ Win probability for different values of $N$ under respective induced equilibria.  The equilibrium win probability for a popular candidate $A$ increases with $N$ whereas it decreases for the unpopular candidate $B$. \rmr{maybe add in appendix a similar figure  for $n^*$ as a function of N}}
\label{fig:win_per_size}
\end{figure}

    \usetikzlibrary{patterns}
      \pgfdeclarepatternformonly{south west lines}{\pgfqpoint{-0pt}{-0pt}}{\pgfqpoint{3pt}{3pt}}{\pgfqpoint{3pt}{3pt}}{
            \pgfsetlinewidth{0.4pt}
            \pgfpathmoveto{\pgfqpoint{0pt}{0pt}}
            \pgfpathlineto{\pgfqpoint{3pt}{3pt}}
            \pgfpathmoveto{\pgfqpoint{2.8pt}{-0.2pt}}
            \pgfpathlineto{\pgfqpoint{3.2pt}{.2pt}}
            \pgfpathmoveto{\pgfqpoint{-.2pt}{2.8pt}}
            \pgfpathlineto{\pgfqpoint{.2pt}{3.2pt}}
            \pgfusepath{stroke}}
    
    \pgfdeclarepatternformonly{south east lines}{\pgfqpoint{-0pt}{-0pt}}{\pgfqpoint{3pt}{3pt}}{\pgfqpoint{3pt}{3pt}}{
        \pgfsetlinewidth{0.4pt}
        \pgfpathmoveto{\pgfqpoint{0pt}{3pt}}
        \pgfpathlineto{\pgfqpoint{3pt}{0pt}}
        \pgfpathmoveto{\pgfqpoint{.2pt}{-.2pt}}
        \pgfpathlineto{\pgfqpoint{-.2pt}{.2pt}}
        \pgfpathmoveto{\pgfqpoint{3.2pt}{2.8pt}}
        \pgfpathlineto{\pgfqpoint{2.8pt}{3.2pt}}
        \pgfusepath{stroke}}
  
\iffalse   
\section{A Non-Jury Theorem with Linear Support }
\label{sec:linear} 
\rmr{Sections \ref{sec:linear}, \ref{sec:stability}, \ref{sec:beyond} should be subsections of a section called 'Equilibrium with Abstentions' or something like that.  Then all the new stuff goes under `Full Participation Equilibrium' (with one or more rounds).}
Let $\mathcal{C} $ be the set of all pivot points. %It is easy to see that when support functions $s_A(.)$ and $s_{B}(.)$ are linear, there can be at most one pivot point. Also recall that $A$ is the  popular candidate i.e. $s_A(1) > s_B(1)$.  % We first show that there always exists a trivial equilibrium point  and the trivial equilibrium admits   Jury theorem. 
\begin{restatable}{proposition}{thmTrivial}
    Let $\mathcal{I}_N$ be an election instance with $N$ voters and    positive, continuous and  increasing functions $s_A, s_B $  defined  over $[0,1]$ such that   $s_A(c) > s_B(c) $ for all $c \in [0,1]$. Also let $c^0 (N)$ be an equilibrium cost threshold (Eq.~\ref{eq:equilibria}) induced by a tie-sensitive PPM, then 
    $\lim_{N \rightarrow \infty} c^0(N) = 0. $ 
    Furthermore, the better alternative (candidate $A$) wins an election with a probability of $1$ in the limit. That is $\Pr(A \ wins ) \rightarrow 1$ as $N \rightarrow \infty$. 
    \label{thm:trivial}
\end{restatable}
\fi 
 We begin by showing that for large values of $N$, non-trivial equilibria emerge   around the pivot point and converge to the pivot point as $N$ goes to infinity. 
 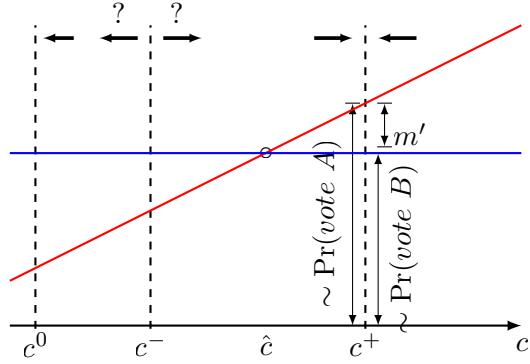
\begin{figure}[t!]
 \centering 
 \begin{tikzpicture}[scale=0.85, blend group=soft light]
\def\a{40}
\def\bx{-20}
\def\by{-13}
%\draw[pattern=south west lines, pattern color=red] plot[smooth,samples=100,domain=0:5.55] (\x,{3}) -- 
 %   plot[smooth,samples=50,domain=5.55:0] (\x,{0.3});
%\draw[pattern=north west lines, pattern color=blue] plot[smooth, samples=100,domain=0:5.55] (\x,{ 0.97 + 0.5*\x}) -- plot[smooth,samples=100,domain=5.55:5.55] (\x,{0.25})  -- plot[smooth,samples=100,domain=5.55:0] (\x,{0.28})  ;
\draw[blue,thick] (0.5*\a+\bx, 0.4*\a+\by) -- (0.7*\a+\bx, 0.4*\a+\by);
\draw[red,thick] (0.5*\a+\bx, 0.35*\a+\by) -- (0.7*\a+\bx, 0.45*\a+\by);
\draw[thick,->] (0,0.3) -- (8,0.3);
\draw[dashed,thick] (0.639*\a+\bx,0.25) -- (0.639*\a+\bx,5);
\draw[dashed,thick] (0.555*\a+\bx,0.25) -- (0.555*\a+\bx,5);
%\draw[dashed,thick] (0.6*\a+\bx,0.25) -- (0.6*\a+\bx,5);
\draw[dashed,thick] (0.51*\a+\bx,0.25) -- (0.51*\a+\bx,5);
\draw[double,->] (0.55*\a+\bx,4.8) -- (0.535*\a+\bx,4.8);
\node at (0.55*\a+\bx + 0.6,4.8 +0.4 ) {$?$};
\node at (0.55*\a+\bx - 0.31,4.8 +0.4 ) {$?$};
\draw[double,->] (0.56*\a+\bx,4.8) -- (0.575*\a+\bx,4.8);
\draw[double,->] (0.619*\a+\bx,4.8) -- (0.634*\a+\bx,4.8);
\draw[double,->] (0.659*\a+\bx,4.8) -- (0.644*\a+\bx,4.8);
\draw[double,->] (0.525*\a+\bx,4.8) -- (0.513*\a+\bx,4.8);

%\draw[dotted,thick] (0.639*\a+\bx,  0.418*\a+\by) -- (0, 0.418*\a+\by);
\node at (0.6*\a+\bx, 0.4*\a+\by) {$\circ$};

%\node at (0.628*\a+\bx,-0) {$0.628$};
\node at (0.639*\a+\bx,-0) {$c^+$};
%\node at (0.569*\a+\bx,-0) {$0.569$};
\node at (8,0) {$c$};
%\draw[draw=gray!50!white,fill=gray!50!white] 
%    plot[smooth,samples=100,domain=0:1] (\x,{0}) -- 
 %   plot[smooth,samples=100,domain=1:0] (\x,{1});
\node at (0.555*\a+\bx,-0) {$c^-$};
\node at (0.51*\a+\bx,-0) {$c^0$};

%\node at (0.6*\a+\bx,-0) {$0.6$};
\node at (0.6*\a+\bx,-0) {$\hat c$};
%\node at (-0.4, 0.4*\a+\by) {$0.4$};
%\node at (-0.5, 0.35*\a+\by) {$0.35$};
%\node at (-0.6, 0.414*\a+\by) {$0.414$};
%\node[blue] at (0.7*\a+\bx, 0.4*\a+\by +0.3) {$s_B = 0.4$};
%\node[red] at (0.7*\a+\bx, 0.45*\a+\by+0.3) {$s_A = 0.1 + \frac{c}{2}$};
\draw[|<->|] (0.639*\a+\bx+0.3, 0.4*\a+\by+0.09) -- (0.639*\a+\bx+0.3, 0.414*\a+\by + 0.23 );
\draw[<->|] (0.639*\a+\bx+0.2, 0.3) -- (0.639*\a+\bx+0.2, 0.4*\a+\by + 0.01 );
\draw[<->|] (0.639*\a+\bx-0.2, 0.3) -- (0.639*\a+\bx-0.2, 0.414*\a+\by + 0.23 );
\node[rotate=90] at (0.639*\a+\bx+0.6,1.5) {$ \sim \Pr({vote}~B)$};
\node[rotate=90] at (0.639*\a+\bx-0.6,1.9) {$ \sim \Pr({vote}~A)$};
\node at (0.639*\a+\bx+0.7, 0.4*\a+\by+0.3) {$m'$};
% \tikzstyle{dot}=[rectangle,draw=black,fill=white,inner sep=0pt,minimum size=4mm]
% \tikzstyle{active}=[circle,draw=blue,fill=blue,inner sep=0pt,minimum size=2mm]
% \tikzstyle{inactive}=[circle,draw=blue,fill=white,inner sep=0pt,minimum size=2mm]
% \tikzstyle{sybil}=[rectangle,draw=black,fill=red,inner sep=0pt,minimum size=1.6mm]
% \tikzstyle{del}=[triangle,draw=blue,fill=white,inner sep=0pt,minimum size=2mm]
% \tikzstyle{virtual}=[diamond,draw=black,fill=gray,inner sep=0pt,minimum size=2mm]
% \tikzstyle{txt}=[text width = 8cm, anchor=west]

% \node at (2,0.5) {$r$};
% \node at (2,0) {$*$};
% \node at (5,0) {$*$};
% %\node at (5,-0.5) {$\calG(H)$};
% %\node at (4.4,1.35) [txt] {$\overline \calG_\beta(H)$};

% \draw[thick] (3.6,0) -- (5.8,0);
% \draw[dashed] (3.6,0) -- (2,0);

% \draw [decorate,decoration={brace,amplitude=4pt,raise=4pt},yshift=0pt]
% (3.6,0.1) -- (5.8,0.1) node [black,midway,yshift=0.8cm] {
% $\overline \calG_\safe(H)$};

% \draw [decorate,decoration={brace,amplitude=4pt,raise=4pt},yshift=0pt]
% (5.8,-0.1) -- (2,-0.1) node [black,midway,yshift=-0.8cm] {
% $\calB(r;\overline \calG_\safe(H))$};
\end{tikzpicture}
\caption{The pivot point $\hat c$ is marked by a circle at the intersection of the support functions, with the two non-trivial equilibria on its sides (dashed lines). For the upper equilibrium $c^+$, the probability of a random voter to vote $A$ or $B$ is proportional to $s_A(c^+)$ and $s_B(c^+)$, respectively. The $m'$ is   proportional to the margin of victory. The bold arrows above indicate that $c^+, c^0$ are stable  equilibria whereas $c^-$ is often not stable.}
\vspace{15pt}

\label{fig:linearSupport}
\end{figure}

 \section{Conclusion}
   Our  results show  that under the boundedly rational PPM    the equilibrium outcome does not guarantee a  decisive win for any of the candidates even with an arbitrarily large electorate size. Interestingly, the proposed model also captures the equilibria induced in fully rational models (Binomial and Poisson models) as trivial equilibria.     Our boundedly rational model satisfies the property that    both candidates enjoy almost equal support  due to disproportionate abstention from their supporters.  The \emph{almost} equal support for candidates is a result of a trade-off between the following two factors;  
we have the  trade-off between following two factors. 
then contested by  concluding that agents also receive utility from the performing the civic duty .
a)  The candidate having majority support among high-cost voters  faces  more abstention. This increases the winning probability  of the unpopular candidate against the popular candidate and       b) voters are incentivized to vote  depending on how many \emph{other} voters  vote. Hence, when a large fraction (overall) of  voters abstain,  the size of the voting population shrinks increasing the pivotality parameter which causes more voters to vote.  

  The number of votes on the ballot increases with the voting population, however, this does not benefit  both the candidates equally; disproportionate  increase in the support of unpopular candidates leads to anti-Jury results. It is interesting to see how and if  this happens in practice. We also leave the sensitivity analysis and robustness of these results as future work. Finally, coming up with a fully rational model that explains both high turnouts and surprises in large elections is an interesting future problem.

 \newpage 

\appendix 
\newpage 

\section{Missing Proofs}

%\begin{restatable}{lemma}{LemOne}
%Given a tie-sensitive PPM model satisfying properties 1 and 2, the induced  voting equilibria are threshold equilibria. \rmr{is that term defined??}
%\label{lem:one}    
%\end{restatable}

%\begin{proof}
%TODO!!
%\end{proof}
\propOne*
\begin{proof} Suppose there are $N$ voters sampled from $\calD$.
    Given a cost threshold $c$,  let the random variable $X_A^{(i)}(c) := \frac{1}{N-1} \sum_{j \neq i}   \mathbb{1}[ c_j \leq c \text{ and } A \succ_{j} B ]$ denote fraction of supporters---other than voter $i$---of candidate $A$ having the voting cost at-most $c$ and similarly, define  $X_B^{(i)}(c):= \frac{1}{N-1} \sum_{j \neq i}^{N-1} \mathbb{1}[ c_j \leq c \text{ and } B \succ_{j} A ]$. Note that, as $X_{T}^{(i)}(c)$ does not depend on her type $T_i$ and her private cost $c_i$ and since every agent  is exposed to the same knowledge, we have, $ X_{T}(c) := X_{T}^{(1)}(c) = X_{T}^{(2)}(c) = \cdots = X_{T}^{(N)}(c)$ for all $T \in \{A,B\}$. Finally, let $ s_T(c) := \mathbb{E}[X_{T}(c)] $, and note that it depends neither on $i$ nor on $N$.

    To see that $s_T(c)$ is continuous, let $s^+, s^-$ the right and left limits of $s_T(c)$ at $c^*\in(0,1)$. If $s^-<s^+$ then $Pr_{\calD}[(c^*,T)] = s^+ - s^- >0$, in contradiction to out assumptions that $\calD$ has no atoms at $(0,1)$.

    In the other direction, we define $Pr_\calD[T_i=A]:=s_A(1)$, then by assumption $Pr_\calD[T_i=B]:=1-s_A(1)=s_B(1)$. For each $T\in \{A,B\}$, then function $s_T(c)/s_T(1)$ is the CDF of $Pr_\calD[c_i| T_i]$.
\end{proof}
\PropEquilibrium*
\begin{proof}
Let $c$ be the cost value such that $c=p(n(c),m(c))$ and fix a player $i$.   We show that the best response of player $i$ is to vote when $c_i\leq c$ and abstain otherwise irrespective of her type. 

If not, say $c_i > c$ and $i$ votes. Then the perceived  utility of player $i$ is given by 
\begin{align*}
     p(n(c), m(c)) - c_i  = c  - c_i< 0.  
\end{align*}

We note that the perceived pivotality $p$  only defines agents subjective belief on the value of individual vote and may not be from well defined probability space.  

Similarly,  when $c_i < c$, the best response strategy is to vote as voting gives strictly positive payoff to agent $i$.\footnote{Note that as agent considers his vote to be of value $p(.)$, this value is not realized if agent abstains giving her the payoff of zero.} Since the perceived pivotality is independent of player type, the voter strategies are the same.

Conversely, suppose there are two values $0< c_1<c_2\leq 1$ such that agents $i$ votes if $c_i \in [c_1, c_2]$ and abstains otherwise. Consider the following cases. 

\begin{enumerate}
    \item $c \in [c_1, c_2]$: In this case,  either there is an interval $\mathcal{I} := [c, c_2] $ such that agent $i$ votes (i.e. when $c_i \in \mathcal{I}$) but  in this case  voting is not a best response strategy for $i$  since $c_i < c$, or there is an interval $\mathcal{I}' = [0,c]$ such that voting is a best response strategy (when $c_i > c$). In either case, from continuity of support function we have that the probability measure on at least one of $I  $ and $\mathcal{I}'$ is nonzero. 
    \item $c \notin [c_1, c_2]$: We consider two subcases 
    \begin{itemize}
        \item $c < c_1$: In this case there is an interval $[c,c_1)$ such that the best response of agent $i$ is not to vote when  $c\geq c_i$. 
        \item  $c > c_2$: In this case there is an interval $[c_2,c)$ such that the best response of agent $i$ is not to vote when  $c\geq c_i$. 
    \end{itemize}
This completes the proof. 
\end{enumerate}
\end{proof}
\propTwo*
\begin{proof}
    For any $c\in [0,1]$, define $f(c):=p(n(c),m(c))$. Since the support functions are continuous, $n(c)$ and $m(c)$ are continuous (by Eqs.~\eqref{eq:n_c},\eqref{eq:m_c}). By continuity of $p$, the function $f$ is also continuous. Finally, every continuous function from $[0,1]$ to itself has a fixed point, due to intermediate value theorem applied to $f(x)-x$.
\end{proof}

\propThree*
\begin{proof}
    Assume towards a contradiction that there is a nontrivial equilibrium $\ol c=(c_N)_N$ with limit $c^*>\delta$ for some $\delta>0$. This means  that $c_N>q$ for all sufficiently large $N$. However, for any $N$, the number of active voters $n(c_N)$ is at least $(s_A(\delta)+s_B(\delta))N= \Omega(N)$, and thus $p(n,m)= O(\frac{1}{\sqrt{n}}) = O(\frac{1}{\sqrt{N}})$. This means that for sufficiently large $\ol{N}$, the pivot probability is $p(n(c_{\ol{N}}),m(c_{\ol{N}}))<\delta$. A contradiction.
\end{proof}
\propFour*
\begin{proof}
For any $\delta>0$, we should show that there is some $N_\delta$ and $c_{\delta}<\delta$ such that $c_\delta \in C^*(I,N_\delta)$.

W.l.o.g. $s_B(0)>s_A(0)$, and by the bounded derivatives, the margin $m(c)$ is at least some $\eps>0$ in some neighborhood $c\in [0,q]$.
Let $\ul{\delta}:=\min\{q,\delta\}$.

 By weakly vanishing pivotality, 
 $$p(n(\ul{\delta}),m(\ul{\delta}))\leq p(n(\ul{\delta}),\eps) = p(s_B(0)N,\eps)\rightarrow 0$$
 as $N$ grows. In particular there is some $N_\delta$ for which $p(s_B(0)N_\delta,\eps)<\ul{\delta}$. We argue that there must be an equilibrium in the range $[0,\ul{\delta}]$.

 Indeed, fix $N_\delta$ and let $f(x):=p(n(x),m(x))-x$. We know that $f$ is continuous, that $f(0)\geq 0$ and that $f(\ul{\delta}) = p(n(\ul{\delta}),m(\ul{\delta}))-\ul{\delta}< \ul{\delta}-\ul{\delta}=0$. From the intermediate value theorem, there must be some $c_\delta\in [0,\ul{\delta})$ for which $f(c_\delta)=0$, meaning $c_\delta=p(n(c_\delta),m(c_\delta))$. Thus $c_\delta\in C^*(I,N_\delta)$.
\end{proof}

\ProbOne*
\begin{proof}
We prove the result by looking at the best response of each type of voters. Let $u_i(p_i,c_i)$ be the utility of voter $i$ under perceived probability $p$ and cost of voting $c$. We will consider that two types ($A$ supporters and $B$ supporters) of voters have different perceived pivotalities $p$ and $p'$ respectively.   We will drop the subscript whenever it is clear from the context. 

First, note that since $n = n(c^+, c^+)$ we have that  either $c_A > c^+ > c_B$ or $c_B > c^+ >c_A$. That is, with same voting population as in the equilibrium,  either $A$'s supporters are more in number or $B$'s supporters are more than their number under the    equilibrium. 

\textbf{ Case 1 ($c_A  > c^+ > c_B > \widehat c$):} In this case we have 

$$s_A(c_A) > s_A(c^+) > s_B(c^+) > s_B (c_B) > s_A(\widehat c) = s_B(\widehat c). $$ Hence, there is a more participation from $A$ supporters compared to the equilibrium point $c^+$ and a correspondingly less participation from $B$ supporters than equilibrium point. %Let $p$ be the perceived pivotality of $A$ voters and $p'$ be perceived pivotality of $B$ voters. 
This gives $ p  <  p^+ <  p'$.   

%The above inequalities follow from the linearity (non-decreasing property) of the support functions $s_A$ and $s_B$ and the definition of $\widehat c$. This gives us     $m(c_A, c_B) > m(c^+,c^+)$ hence the perceived pivotality $p$ of $A$ voters   under $(c_A,c_B)$ is smaller than $p^+$ under $c^+$ i.e.  $p < p^+$. 
The utility for $A$ supporters  with cost  $c \in [c^+, c_A]$  (the shaded region on the right of $c^+$ in Figure \ref{fig:stabilityOne}) from voting is given by $$ u(p,c) = p - c <  p^+ - c = c^+ - c \leq  0.$$  Hence these voters are incentivized to abstain from election.  

The $B$ supporters in cost  $c \in [c_B, c^+]$, on the other hand, obtain strictly positive  utility by voting given by $$ u(p',c) = p'-c   >  p^+ - c \geq  p^+ - c^+ =0 $$ 

\textbf{Case 2 ($c_B > c^+ > c_A$):}  In this case, the $A$ voters' perceived pivotality $p$ is larger than that of equilibrium pivotality $p^+$ whereas $B$ supporters' perceived pivotality $p'$ is less that $p^+$ i.e., $ p' <  p^+ < p$. The utility of $A$ supporters in range $[c_A, c^+]$ (shaded region left of $c^+$ in Figure ~\ref{fig:stabilityOne}) from voting is 
$$u(p,c) = p - c > p - c^+ = p - p^+ > 0.$$ Similarly the utility obtained by $B$'s  supporters in cost range $c \in [c^+, c_B]$ by voting is given by 
$$ u(p',c) = p' - c < p^+ - c  = c^+ - c \leq  0.$$
Hence these $B$ supporters  are incetivized to vote.

    \begin{figure}
    \centering
\begin{tikzpicture}[scale=1.2]
\begin{axis}[
    xlabel={$c_A$},
    ylabel={$c_B$},
    xmin=0, xmax=100,
    ymin=0, ymax=100,
    legend pos=north west,
     axis y line*=left,
      axis x line*=bottom,
     ticks=none
]
    \node at (50,45) {$c^+$};
       \node at (20,15) {$\widehat c$};
%    \node at (40, 40) {$ \circ$};
%    \addplot[->., color=blue, line width = 0.7mm,    dashed] coordinates{(0.4,0.4)  (0.45,0.45)};
 %     \addplot[->., color=blue, line width = 0.7mm,    dashed] coordinates{(0.6,0.6)  (0.55,0.55)};
    \draw[dotted] (0.07,0.25) -- (0.7,2);
   \addplot[color=red] coordinates {
    (0,0)(1,1)
    };
    \addplot[color=blue, dashed] coordinates {
    (0.3,0)(0.7,1)
    };
     \addplot[color=black, dashed] coordinates {
    (0.8,0)(0.2,1)
    };
 %    \node at (40,36)  {$ c^{"} $};
%          \node at (60,64)  {$ c^{'} $};
  \node at (30,50) [module right arrow, color=red, scale=0.65] {$ \ A \ \ $};
  \node at (70,50) [module left arrow, color=red, scale=0.65] {$\ A \ \ $};
  \node at (50,30) [module up arrow, color=red, scale=0.65] {$ \ B \ \  $};
  \node at (50,70) [module down arrow, color=red, scale=0.65] {$\ B \ \ $};
 \node at (39,18) [rotate= 65] {$p = p^+$};
 \node at (73,18) [rotate=-56] {$p' = p^+$};
    \node at (50,50) [scale=1.5] {$\star$};
     \end{axis} 
\end{tikzpicture}
   \caption{A schematic illustation of stability of the equilibrium point $c^+$. When $c' > c^+$ is an equilibrium estimate, the $B$  supporters from  right shaded region are incentivized to participate  whereas under $c"<c^+$ the $B$ supporters from left shaded region are incetivized to abstain. \rmr{move to appendix}      }
    \label{fig:stabilityTwo}
\end{figure} 

\end{proof}

\section{A Detailed Example }
\label{sec:example}
%We demonstrate our results with an example. 
\exOne*

\iffalse
\paragraph{Non-trivial Equilibrium:} The proposed PPM also induce equilibria around (and close to) each {\em pivot point} i.e. the cost value $c$ for which $s_{A}(c) = s_B(c)$. probability that $A$ wins equals
$$\P_{x\sim Bin(n^*, l_m)}[x\geq n^*/2]\cong \P_{x\sim Norm(n^* l_m,n^* l_m(1-l_m))}[x\geq n^*/2].$$
This Normal distribution has a mean of $\mu=n^* l_m=n^*(0.5+m^*/2)\cong n^*(0.5+\eps^*/3.2)$. The standard  deviation $\sigma$ of this Normal distribution satisfies    
$\sigma=\sqrt{n^* l_m(1-l_m)}> \sqrt{0.2n^*}>\sqrt{0.2\cdot 0.6 N}> \frac13\sqrt N.$
On the other hand, 
$$|n^*/2 - \mu| = n^*\eps^*/3.2< (0.6 N)\eps^*/3.2 < 0.5\sqrt{N}\cdot 0.5\sqrt{N}\eps^*<0.5\sqrt{N}\cdot 1.6 < 0.8\cdot 3 \sigma< 3\sigma.$$
In words, the difference between the mean and the threshold for $A$'s victory is bounded by 3 standard deviations, meaning that the winning probability is at most $99.7\%$. This contradicts the Jury theorem. 
\fi 
%Note that for sufficiently small $N$, there is an equilibrium  when $c^*$ is large and everybody votes. E.g. if the value of the winner in the example above is 10 rather than 1 (meaning $c^*=10 \cdot p(c^*)$ in equilibrium), then for $N\leq 23$ we get that the tie probability is more than $1/10$ and thus $c^*>1$ and $n^*=s(1)N=N$. 

 \subsection{Analysis of Example \ref{ex:first} under fully rational models}

\emph{Binomial PPM:}  The expected number of active voters for a given cost $c$ is given as $n(c) = N(s_A(c)+s_B(c)) = N(1+c)/2$. As for the winning  margin, we have 
$m(c) = \frac{|0.3-c/2|}{0.5+c/2}=\frac{|0.6-c|}{1+c}.$
  The equilibrium $c^*$ satisfies 
\begin{align*}
c^*&= \ppm(n(c^*),m(c^*)) = \ppm\Bigg(N\frac{1+c^*}{2},\frac{0.6-c^*}{1+c^*}\Bigg)\\
%&\cong \ppm(N(1+c^*)/2,0.6-1.6c^*)\\
%&\cong \frac{1}{\sqrt{\pi N(1+c^*)}}((0.4+1.6c^*)1.6(1-c^*))^{N(1+c^*)/4}.
&\cong \frac{1}{\sqrt{\pi N(1+c^*)}} \Bigg( 1 - \Big ( \frac{0.6-c^*}{1+c^*} \Big)^2 \Bigg )^{N(1+c^*)/4}  \leq  \frac{1}{\sqrt{\pi N }}
\end{align*}

%By solving we get $c^*(N=10) \leq 0.0237, c^*(N=20)=0.0165, c^*(N=40)=0.0011$ and so on.
For large value of $N$, $c^*$ is close to $0$ and hence the margin remains close to $0.6$ in favor of the unpopular candidate $B$,   increasing the winning probability of $B$. Note that this equilibrium point indeed satisfies the Jury theorem as given above. %Furthermore, it is worth noting that Binomial PPM supports only a trivial solution (solution around 0).  
 
% \emph{Poisson PPM} \gan{Low priority}

 \subsection{Analysis of Example \ref{ex:first} under proposed tie-sensitive PPM model}
Consider the proposed tie-sensitive model  $\ppm(n,m)=\min\{1,1/m\sqrt n\}$ and observe that
% and recall  $m(c)=\frac{|0.6-c|}{1+c}$and $n(c) = N (1+c)/2$. 
%\begin{align*}
$c^*  = \ppm(n(c^*),m(c^*)) = \min(1, \frac{1}{m(c^*)\sqrt{n(c^*)}})  
 = \min (1, \frac{\sqrt{1+c^*}}{|0.6-c^*|} \sqrt{\frac{2}{N}}  ).$ 
%\end{align*}
For large values of $N$ we obtain  three non-negative solutions for $c^*$; $ 0 <c^0 <  c^- < 0.6 <  c^+ \leq 1$. For instance. for $N = 10^4$ we have. $(c^0, c^- , c^+ )$ = $(0.02489,0.5689, 0.62823)  $  and for $N=10^6$ it is $(0.002369, 0.5970, 0.602964)$. See Figure \ref{fig:pivotProb}.

%The first point $c^0$ converges to the origin (i.e. has a trivial solution with core supporters only) whereas remaining  two solutions converge to the pivot point $\widehat c$ i.e. $ N \rightarrow \infty$ we have that $c^0$ decreases to  $0$ (trivial solution), whereas  $c^-$ and $c^+$ converge to $\hat c=0.6$ from both sides. 

 \noindent  \textbf{Trivial Equilibrium:} We first track the equilibrium around $0$; i.e. $c^0$. Similar to the strongly vanishing Binomial  PPM model, this model also supports  an equilibrium solution that goes to $0$ with rate $N^{-1/2}$. Furthermore, a trivial solution always exists and candidate $B$ wins in a trivial equilibrium satisfying the Jury theorem.

\noindent  \textbf{Non-trivial Equilibria:} 
We will consider the equilibrium $c^+$ i.e. the one in  which popular candidate (candidate~A) wins with large probability. The analysis of $c^-$ is analogous.   
Let $c^+:=0.6 + \eps$ and note that for large $N$, the winning margin  $m(c^+)$ is approximately  $\eps/1.6$ and the number of active voters is   
$n(c^+)= N(1+c^+)/2=N(1.6+\eps)/2=0.8N+0.5 N\eps$. Hence 
\begin{align*}
     c^+  &= \eps +0.6  =  \ppm(0.8N+N\eps/2,\eps/1.6) \\ & \cong  \min\{1,\frac{1.6}{\eps\sqrt{0.8N+ 0.5 N\eps}}\}
\end{align*}  

 The minimum on the right side of the above equation is less than $1$ for large enough $N$. We have  \begin{equation}(\eps+0.6)\eps\sqrt{0.8N+0.5 N\eps} = 1.6 \end{equation}
%Solving for $N=100$, we get 
%$c^*(N=100)=0.82$, meaning $n(0.82)=100(1+0.82)/2\cong 91$ active voters in expectation and a margin of $m(0.82)=\frac{0.82-0.6}{1+0.82}\cong 0.12$ (i.e. probability of 0.56 to vote A). The actual binomial winning probability of $A$ is thus 87.5\%.
%For $N=1000$ we get $c^*=0.683$, meaning $n^* = 1000(1+0.683)/2\cong 841$ active voters and a margin of $m^*=\frac{0.083}{1.683}\cong 0.05$ (probability of 0.525 to vote for A).
%The winning probability of $A$ is then about $0.9267$. For $N=10000$ we get $c^*=0.6285$ with $n^*\cong 8142.5$ active voters and a margin of $m^*=0.0175$ (vote A w.p. 0.50875).  Winning probability is $0.943$.
The following sequence of inequalities   
$$0.5\eps\sqrt N< 0.6\eps\sqrt{0.8N}< (\eps +0.6)\eps \sqrt{0.8N+ 0.5N\eps} = 1.6 $$ %< 1.6 \varepsilon \sqrt{0.8} \sqrt{N+1} $$ 
imply   $\eps  < \frac{3.2}{\sqrt{N}}$. This gives an upper bound on the rate at which $\varepsilon$ decreases as a function of population size $N$. Similarly, we lower bound the value of $\varepsilon$ using
\begin{align*}
    1.6 = (\eps +0.6)\eps \sqrt{0.8N+ 0.5N\eps} \leq  \varepsilon \sqrt{N} 
\end{align*} to $ 1.6/\sqrt{N}$.
The last inequality above follows from the fact that $c^+ = 0.6 + \varepsilon \leq 1$ i.e. $\varepsilon \leq 0.4$.
 Together we  have \begin{equation}
 \frac{1.6}{\sqrt{N}}  \leq  \varepsilon \leq  \frac{3.2}{\sqrt{N}}
 \label{eq:boundepsilon}
\end{equation}
For large value of $N$, Eq. \eqref{eq:boundepsilon} above gives a tight characterization of the equilibrium threshold point $c^+$.  This, consequently, gives a reasonably tight bound on the probability that a random voter will vote for  popular candidate $A$ is given by \begin{equation*}
    \Pr(A) := \frac{s_A(c^+)}{s_A(c^+) + s_B(c^+)} = \frac{0.2+c^+}{1+c^+}. 
\end{equation*} 
\iffalse 
\begin{align*}
%0.6 + \frac{1.6}{\sqrt{N}} < c^+ &< 0.6 + \frac{3.2}{\sqrt{N}} \text{ , } \\     
0.8 \sqrt{N} (\sqrt{N} +1)  < n(c^+) &< 0.8 \sqrt{N} (\sqrt{N} +2)  \text{ and, } \\ 
%\frac{1}{ \sqrt{N}} < m(c^+) &< \frac{2}{\sqrt{N}}. 
\end{align*}
This gives $ \frac{\sqrt{N} + 2}{ 2(  \sqrt{N} + 1)}< \Pr(A) = \frac{0.2 + c^+}{ 1 + c^+} < \frac{ \sqrt{N} + 4}{2(\sqrt{N} + 2)} $. 
\fi
A straightforward calculation, gives the following bounds on $P(A)$.
\begin{equation}  0.5 + \frac{1}{2(\sqrt{N}+1)} < P(A) < 0.5 + \frac{1}{2+ \sqrt{N}} \end{equation}
Let  $\delta_1 = \frac{1}{2(\sqrt{N}+1)}$ and $\delta_2 = \frac{1}{2+ \sqrt{N}}$. 

Using this, we now analyze the win  probability of candidate $A$. For this, we will approximate the Binomial random variable by the appropriate gaussian random variable. Let,  $\mu = n(c^+)\Pr(A)$ and  $\sigma = \sqrt{n(c^+) \Pr(A)(1 - \Pr(A))}$. Note that $\Pr(\text{A wins}) \simeq     \Pr_{X \sim \mathcal{N}(\mu, \sigma^2) } (X \geq \frac{n(c^+)}{2})$. 

Using the bounds above  we see that
\begin{align*}\sigma &> \sqrt{0.8 N (0.25 - \varepsilon_2)^2} =  \sqrt{0.8 N (0.25 - 1/(2 + \sqrt{N})^2)}  \\ & \approx  \sqrt{0.2 N} > 0.4 \sqrt{N}\end{align*} 
On the other hand \begin{align*}
    |n(c^+)/2 - \mu | &\leq  n(c^+)  \varepsilon_2 =   0.8\sqrt{N}   <  2 \sigma.
\end{align*}
This implies that the difference between the mean and the threshold for $A$'s victory is upper bounded by two times standard deviations.  Hence the candidate $B$ stands at least a 0.6 percentage chance of winning the election. 

%\newline \noindent \textbf{Case 2 ($c^* = c^-$):  }
\iffalse
\begin{figure}
    \centering
    \includegraphics{antiJury.png}
    \caption{Linear Support}
    \label{fig:my_label}
\end{figure}

\begin{figure}[ht]
%\includegraphics[width=0.48 \linewidth]{VotingOne.png}
\includegraphics[width=0.48 \linewidth]{VotingThree.png}
    \caption{Caption}
    \label{fig:first}
\end{figure}
\fi 
%\input{plot.tex} 
% \documentclass[11pt]{article}
% \usepackage{tikz}
% \usetikzlibrary{shapes.geometric, arrows, fit}
% \usetikzlibrary{decorations.pathreplacing}
% \begin{document}
 \begin{figure}[ht!]
 \centering 
 \begin{tikzpicture}[scale=1.3]
\def\a{40}
\def\bx{-20}
\def\by{-13}
\draw[blue,thick] (0.5*\a+\bx, 0.4*\a+\by) -- (0.7*\a+\bx, 0.4*\a+\by);
\draw[red,double] (0.5*\a+\bx, 0.35*\a+\by) -- (0.7*\a+\bx, 0.45*\a+\by);
\draw[thick,->] (0,0.3) -- (8,0.3);
\draw[dashed,thick] (0.628*\a+\bx,0.25) -- (0.628*\a+\bx,5);
\draw[dashed,thick] (0.569*\a+\bx,0.25) -- (0.569*\a+\bx,5);
\draw[dotted,thick] (0.628*\a+\bx,  0.414*\a+\by) -- (0, 0.414*\a+\by);
\draw[dashed,thick] (0.51*\a+\bx,0.25) -- (0.51*\a+\bx,5);

\node at (0.6*\a+\bx, 0.4*\a+\by) {$\circ$};

\node at (0.628*\a+\bx,-0) {$0.628$};
\node at (0.628*\a+\bx,-0.5) {$c^+$};
\node at (0.569*\a+\bx,-0) {$0.569$};
\node at (0.51*\a+\bx,-0) {$0.025$};

\node at (8,0) {$c$};

\node at (0.569*\a+\bx,-0.5) {$c^-$};
\node at (0.6*\a+\bx,-0) {$0.6$};
\node at (0.6*\a+\bx,-0.5) {$\hat c$};
\node at (0.51*\a+\bx,-0.5) {$c^0$};
\node at (-0.4, 0.4*\a+\by) {$0.4$};
\node at (-0.5, 0.35*\a+\by) {$0.35$};
\node at (-0.6, 0.414*\a+\by) {$0.414$};

\node[blue] at (0.7*\a+\bx, 0.4*\a+\by +0.3) {$s_B = 0.4$};

\node[red] at (0.7*\a+\bx, 0.45*\a+\by+0.3) {$s_A = 0.1 + \frac{c}{2}$};

\draw[|<->|] (0.628*\a+\bx+0.2, 0.4*\a+\by+0.01) -- (0.628*\a+\bx+0.2, 0.414*\a+\by );
\draw[<->|] (0.628*\a+\bx+0.1, 0.5) -- (0.628*\a+\bx+0.1, 0.4*\a+\by-0.01 );
\draw[<->|] (0.628*\a+\bx-0.1, 0.5) -- (0.628*\a+\bx-0.1, 0.414*\a+\by );
\node[rotate=90] at (0.628*\a+\bx+0.4,1.5) {$\sim Pr({vote}~B)$};
\node[rotate=90] at (0.628*\a+\bx-0.4,1.6) {$\sim Pr({vote}~A)$};
\node at (0.628*\a+\bx+0.6, 0.4*\a+\by+0.3) {$m'$};

% \tikzstyle{dot}=[rectangle,draw=black,fill=white,inner sep=0pt,minimum size=4mm]
% \tikzstyle{active}=[circle,draw=blue,fill=blue,inner sep=0pt,minimum size=2mm]
% \tikzstyle{inactive}=[circle,draw=blue,fill=white,inner sep=0pt,minimum size=2mm]
% \tikzstyle{sybil}=[rectangle,draw=black,fill=red,inner sep=0pt,minimum size=1.6mm]
% \tikzstyle{del}=[triangle,draw=blue,fill=white,inner sep=0pt,minimum size=2mm]
% \tikzstyle{virtual}=[diamond,draw=black,fill=gray,inner sep=0pt,minimum size=2mm]
% \tikzstyle{txt}=[text width = 8cm, anchor=west]

% \node at (2,0.5) {$r$};
% \node at (2,0) {$*$};
% \node at (5,0) {$*$};
% %\node at (5,-0.5) {$\calG(H)$};
% %\node at (4.4,1.35) [txt] {$\overline \calG_\beta(H)$};

% \draw[thick] (3.6,0) -- (5.8,0);
% \draw[dashed] (3.6,0) -- (2,0);

% \draw [decorate,decoration={brace,amplitude=4pt,raise=4pt},yshift=0pt]
% (3.6,0.1) -- (5.8,0.1) node [black,midway,yshift=0.8cm] {
% $\overline \calG_\safe(H)$};

% \draw [decorate,decoration={brace,amplitude=4pt,raise=4pt},yshift=0pt]
% (5.8,-0.1) -- (2,-0.1) node [black,midway,yshift=-0.8cm] {
% $\calB(r;\overline \calG_\safe(H))$};
\end{tikzpicture}
\caption{The pivot point $\hat c$ is marked by a circle at the intersection of the support functions, with the two non-trivial equilibria for $N=10000$ on its sides (dashed lines). For the upper equilibrium $c^+$, the probability of a random voter to vote $A$ or $B$ is proportional to $s_A(c^+)=0.414$ and $s_B(c^+)=0.4$, respectively. Thus the margin is $m'$ divided by the fraction of active voters $s_A(c^+)+s_B(c^+)=0.814$.}
\label{fig:pivotProb}
\end{figure}
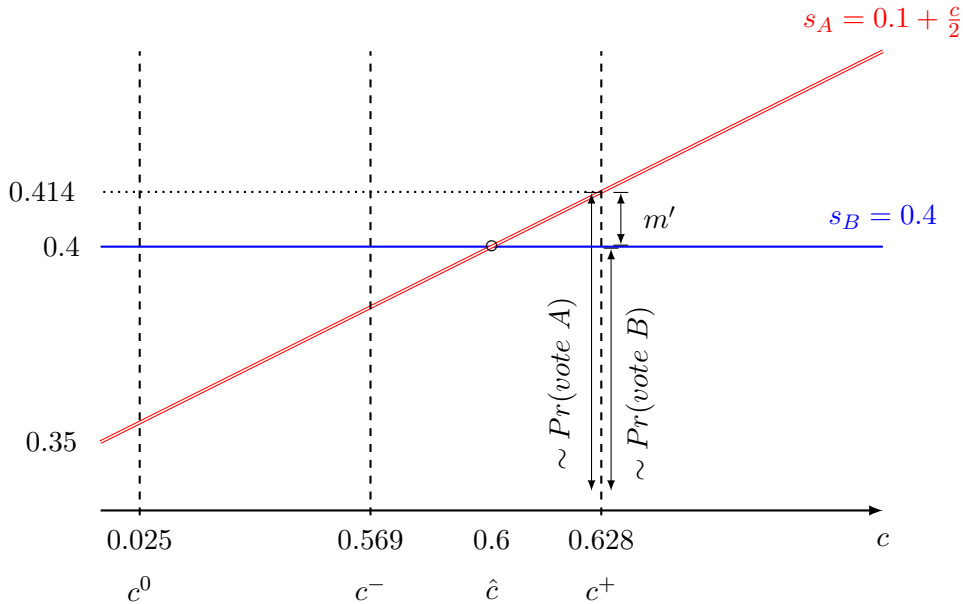

%\rmr{add legend. use different line styles not just different colors (for B\&W print) }

\section{ Additional Numerical results}
In this section, we empirically evaluate the theoretical results.  We first show in Figure \ref{fig:lastbutone}   that the number of active voters in large elections is  a constant fraction of the population size.  We consider an election instance from Section \ref{sec:example}  and observe that for a large value of $N$ both the equilibria induce (almost) a constant fraction of voters to vote. This also implies that for a large value of $N$ both equilibrium points are nearby, favouring different candidates whereas not guaranteeing any candidate a certain victory.

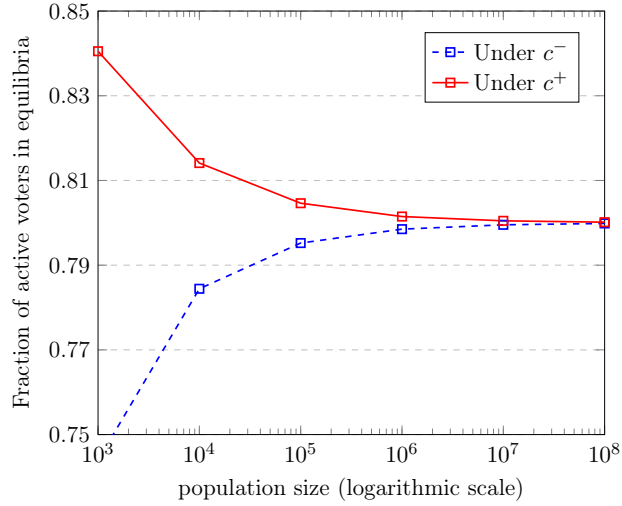
\begin{figure}[ht!]
\centering 
\scalebox{0.8}{
\begin{tikzpicture}
\begin{axis}[
xmode = log,
    log ticks with fixed point,
    xlabel={population size (logarithmic scale)},
    ylabel={Fraction of active voters in equilibria},
    xmin=1000, xmax=10^8,
    ymin=0.75, ymax=0.85,
    xtick={10^3, 10^4, 10^5, 10^6,10^7,10^8},%,10^9},
    xticklabels = {$10^3$,$10^4$, $10^5$, $10^{6}$,$10^7$,$10^8$},
    ytick={0.75, 0.77, 0.79, 
    0.81,0.83,0.85},%,10^9},
    yticklabels = {0.75, 0.77, 0.79, 
    0.81,0.83,0.85},
    legend pos=north west,
    ymajorgrids=true,
    grid style=dashed,
legend style={
at={(0,0)},
anchor=north east,at={(axis description cs:0.95,0.95)}}]
\addplot[thick, 
    color=blue,
    mark=square, mark options={solid},  dashed
    ]
    coordinates {
  %7.84427549e+03   7.95223891e+04   7.98503213e+05   7.99527992e+06
%   7.99850869e+07   7.99952854e+08
   (10^3, 743.960609/10^3)(10^4, 7844.27549/10^4)(10^5, 79522.3891/10^5)(10^6, 798503.213/10^6)(10^7, 7995279.92/10^7) (10^8, 79985086.9/10^8)%(10^9, 799952854)
%, 1,1,1)
    };
\addplot[ thick,
    color=red,
    mark options={solid},
    mark=square
    ]
%    x = [0.5, 0.625, 0.75, 0.875, 1, 1.25, 1.5, 1.75]
%y1 = [0.5, 0.541341, 0.58145,0.66935,0.95108, 1,1,1]
    coordinates {
%8.14113369e+03   8.04629039e+04   8.01482012e+05
%   8.00470528e+06   8.00148983e+07   8.00047132e+08
   (10^3, 840.507978/10^3) (10^4, 8141.13369/10^4) (10^5, 80462.9039/10^5)(10^6, 801482.012/10^6)(10^7, 8004705.28/10^7)(10^8,  80014898.3/10^8)%(10^9,  800047132) 
%, 1,1,1)
    };
\legend{ Under $c^{-}$, Under $c^+$}
%    \legend{CuSO\(_4\cdot\)5H\(_2\)O}
\end{axis}
\end{tikzpicture}
}

\caption{ Number of active voters is a constant fraction of population size under both the equilibrium points.  }
\label{fig:lastbutone}
\end{figure}

Next, in Figure \ref{fig:last} we observe that for the {\em weaker} dependence on the margin of victory i.e. smaller values of $\alpha$ induce smaller equilibrium cost values. For instance, for $N = 10^3$  the equilibrium is  $c^+ =0.8$ for  $\alpha =1.5$ whereas the equilibrium cost is $c^+ = 0.605$ when $\alpha = 0.5$.  This implies that for fully rational models, when $\alpha =0$ our models predicts that only equilibrium is the trivial equilibrium. 

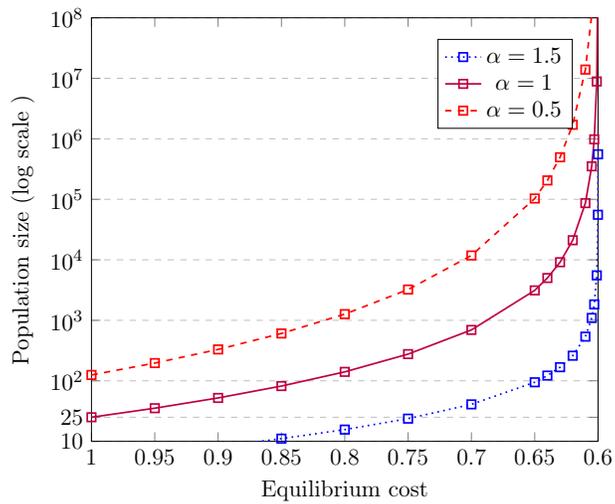
\begin{figure}[ht!]
\centering
\scalebox{0.8}{
\begin{tikzpicture}
\begin{axis}[
 ymode = log,
    log ticks with fixed point,
    xlabel={Equilibrium cost },
    ylabel={Population size (log scale )},
    xmin= -1, xmax= -0.6,
    ymin=10, ymax=10^8,
    xtick={ -0.6, -0.65, -0.7, -0.75, -0.8, -0.85, -0.9, -0.95, -1},%,10^9},
    xticklabels = {0.6, 0.65, 0.7, 0.75, 0.8, 0.85, 0.9, 0.95, 1},
    ytick={10, 25,10^2,10^3, 10^4, 10^5, 
    10^6,10^7,10^8},%,10^9},
    yticklabels = {10, 25,$10^2$,$10^3$, $10^4$, $10^5$, 
    $10^6$,$10^7$,$10^8$, $10^9$, $10^10$},
    legend pos=north west,
    ymajorgrids=true,
    grid style=dashed,
legend style={
at={(0,0)},
anchor=north east,at={(axis description cs:0.95,0.95)}}]
\addplot[thick, dotted, 
    color=blue,
    mark=square, mark options={solid}
    ]
    coordinates {(-1, 5.0000000000000009)
(-0.95, 6.3316185199841737)
(-0.9, 8.2304526748971174)
(-0.85, 11.072664359861593)
(-0.8, 15.625)
(-0.75, 23.703703703703717)
(-0.7, 40.816326530612308)
(-0.65, 94.674556213017652)
(-0.64, 122.07031249999987)
(-0.63, 167.96842193667575)
(-0.62, 260.14568158168549)
(-0.61, 537.48992206396076)
(-0.605, 1092.8215285841366)
(-0.603, 1833.4713020488789)
(-0.601, 5537.0832306671919)
(-0.6001, 55537.041665644203)
(-0.60001, 555537.03749635024)
  %7.84427549e+03   7.95223891e+04   7.98503213e+05   7.99527992e+06
%   7.99850869e+07   7.99952854e+08
%   (10^3, 743.960609/10^3)(10^4, 7844.27549/10^4)(10^5, 79522.3891/10^5)(10^6, 798503.213/10^6)(10^7, 7995279.92/10^7) (10^8, 79985086.9/10^8)%(10^9, 799952854)
  
%, 1,1,1)
    };
    \iffalse 
    \addplot[thick, 
    color=blue,
    mark=square, mark options={solid},  dashed
    ]
    coordinates {(-1, 11.180339887498953)
(-0.95, 14.945072433135454)
(-0.9, 20.712851674268169)
(-0.85, 30.120903548756697)
(-0.8, 46.875000000000007)
(-0.75, 80.963561607582079)
(-0.7, 168.29002553541508)
(-0.65, 543.86391919886591)
(-0.64, 781.63137663975101)
(-0.63, 1238.1145201699812)
(-0.62, 2341.3111342351685)
(-0.61, 6819.9825533187332)
(-0.605, 19579.507265535951)
(-0.603, 42381.883219702118)
(-0.601, 221552.53195584577)
(-0.6001, 7025161.3728473457)
(-0.60001, 222215509.41802403)
    };
    \fi 
        \addplot[thick, 
    color=purple,
    mark=square, mark options={solid}
    ]
    coordinates {
(-1, 25.000000000000014)
(-0.95, 35.276160325626122)
(-0.9, 52.126200274348413)
(-0.85, 81.937716262975798)
(-0.8, 140.62500000000006)
(-0.75, 276.54320987654353)
(-0.7, 693.87755102041001)
(-0.65, 3124.2603550295798)
(-0.64, 5004.8828124999909)
(-0.63, 9126.2842585593753)
(-0.62, 21071.800208116507)
(-0.61, 86535.877452297616)
(-0.605, 350795.71067551529)
(-0.603, 979684.83239481959)
(-0.601, 8864870.2522991505)
(-0.6001, 888648203.69207084)
(-0.60001, 88886481535.871216)   };

\iffalse 
        \addplot[thick, 
    color=brown,
    mark=square, mark options={solid},  dashed, x dir=reverse
    ]
    coordinates {
(-1, 55.90169943749477)
(-0.95, 83.265403556040425)
(-0.9, 131.18139393703174)
(-0.85, 222.89468626079955)
(-0.8, 421.87500000000017)
(-0.75, 944.57488542179158)
(-0.7, 2860.9304341020597)
(-0.65, 17947.509333562561)
(-0.64, 32046.886442229763)
(-0.63, 67270.888929235589)
(-0.62, 189646.20187304847)
(-0.61, 1098017.1910843153)
(-0.605, 6285021.8322371738)
(-0.603, 22646052.933728311)
(-0.601, 354705603.6613481)
(-0.6001, 112409607126.94276)
(-0.60001, 35554703722160.336)   };
\fi 
        \addplot[thick, dashed, 
    color=red,
    mark=square, mark options={solid}
    ]
    coordinates {(-1, 125.00000000000009)
(-0.95, 196.53860752848848)
(-0.9, 330.13260173753986)
(-0.85, 606.33910034602093)
(-0.8, 1265.6250000000007)
(-0.75, 3226.3374485596769)
(-0.7, 11795.918367346987)
(-0.65, 103100.59171597607)
(-0.64, 205200.19531249948)
(-0.63, 495861.4447150589)
(-0.62, 1706815.8168574357)
(-0.61, 13932276.269819904)
(-0.605, 112605423.12684281)
(-0.603, 523478262.10965091)
(-0.601, 14192657273.932503)
(-0.6001, 14219259907278.391)
(-0.60001, 14221925932127762.0)  };
\legend{ $ \alpha = 1.5$,$\alpha =1$, $\alpha = 0.5 $}
%    \legend{CuSO\(_4\cdot\)5H\(_2\)O}
\end{axis}
\end{tikzpicture}
}
\caption{The size of the population that induces equilibrium threshold as a given cost value.  }
\label{fig:last}
\end{figure}

\subsection{Comparing $c^-$ to $c^+$}

\begin{figure}[ht!]
\centering 
 \resizebox{0.6\columnwidth}{0.5 \columnwidth}{
\begin{tikzpicture}
\begin{axis}[
 xmode=log,
    log ticks with fixed point,
    xlabel={population size (logarithmic scale)},
    ylabel={Equilibrium win probability},
    xmin=0, xmax=10^8,
    ymin=0.9, ymax=1,
    label style={font=\Large},
                    tick label style={font=\Large}, 
    xtick={10^3, 10^4, 10^5, 10^6,10^7,10^8},
    xticklabels = {$10^3$,$10^4$, $10^5$, $10^{6}$,$10^7$,$10^8$},
    ytick={0.88,0.9,0.92,0.94,0.96,0.98,1},
    legend pos=north west,
    ymajorgrids=true,
    grid style=dashed,
legend style={
at={(0,0)},
anchor=north east,at={(axis description cs:0.95,0.95)}}]
\addplot[ thick, 
    color=red,
    mark=square, mark options={solid},  dashed
    ]
%    x = [0.5, 0.625, 0.75, 0.875, 1, 1.25, 1.5, 1.75]
%y1 = [0.5, 0.541341, 0.58145,0.66935,0.95108, 1,1,1]
    coordinates {
%    x = []
%y1 = [0.9329, 0.93965, 0.95108, 0.95191, 0.95323]
%y2 = [0.97609, 0.96094, 0.95339, 0.95325, 0.95324]
%0.82244797  0.90815627  0.93816472  0.95092435  0.95176782  0.95207537
 % 0.95216601
% 0.83226153  0.90810929  0.93644931  0.95076779  0.95171826  0.95206293
 % 0.95216058
  
   (10^3, 0.90810929)(10^4, 0.93644931)(10^5, 0.95076779)(10^6, 0.95171826) (10^7, 0.95206293)(10^8, 0.95216058)
  
%, 1,1,1)
    };
\addplot[ thick,
    color=blue,
    mark options={solid},
    mark=square
    ]
%    x = [0.5, 0.625, 0.75, 0.875, 1, 1.25, 1.5, 1.75]
%y1 = [0.5, 0.541341, 0.58145,0.66935,0.95108, 1,1,1]
    coordinates {
%    x = []
%y1 = [0.9329, 0.93965, 0.95108, 0.95191, 0.95323]
%y2 = [0.97609, 0.96094, 0.95339, 0.95325, 0.95324]
%0.98266767  0.96161482  0.95419222  0.95247038  0.95230315
%  0.95223574
    (10^3, 0.98266767)(10^4, 0.96161482)(10^5, 0.95419222)(10^6,  0.95247038)(10^7,  0.95230315)(10^8,0.95223574) 
%, 1,1,1)
    };
\legend{Win probability of $A$ under $c^{+}$,Win probability of $B$ under $c^-$}
%    \legend{CuSO\(_4\cdot\)5H\(_2\)O}
\end{axis}
\end{tikzpicture}
}
\caption{ Win probability for different values of $N$ under respective induced equilibria.  The equilibrium win probability for a popular candidate $A$ increases with $N$ whereas it decreases for the unpopular candidate $B$. \rmr{maybe add in appendix a similar figure  for $n^*$ as a function of N}}
\label{fig:win_per_size2}
\end{figure}
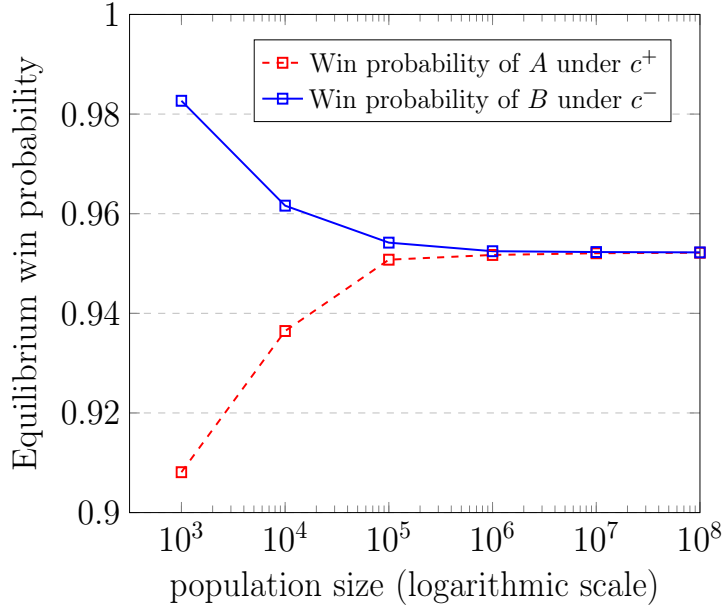

Fig.~\ref{fig:win_per_size2} shows empirically that
\begin{enumerate}
    \item The locally popular candidate at $c^-$ (B) has decreasing winning probabilities;
    \item The locally popular candidate at $c^+$ (A) has increasing winning probabilities;
    \item The winning probability of B at $c^-$ is always larger.
\end{enumerate}
At least for the third observation we can provide a theoretical justification.

\begin{proposition}
For a linear-support, tie-sensitive PPM, and any $N$, it holds that the winning probability of $B$ at $c^-$ is strictly greater than the winning probability of $A$ at $c^+$.
\end{proposition}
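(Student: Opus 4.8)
The plan is to reduce both winning probabilities to the ``signal-to-noise'' ratio $\mu/\sigma$ of the (approximately) Normal variable that already appeared in the proof of the Characterization of Jury Equilibria, and then to show this ratio is strictly larger at $c^-$ than at $c^+$. First I would record the structural facts of the linear setting: writing $s_A(c)=a_1c+b_1$ and $s_B(c)=a_2c+b_2$, the fact that $A$ is popular overall ($s_A(1)>s_B(1)$) together with the existence of a pivot point $\hat c\in(0,1)$ forces $s_A(c)-s_B(c)=(a_1-a_2)(c-\hat c)$ with $a_1>a_2\ge 0$, so $s_A$ is strictly increasing; and the two non-trivial equilibria satisfy $0<c^-<\hat c<c^+<1$, where $c^+<1$ holds once $N>1/\eps^2$ with $\eps:=s_A(1)-s_B(1)$ (since $p(N,\eps)<1$ then, so $c=1$ cannot be an equilibrium). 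Hence at each of $c^-,c^+$ the $\min$ in the tie-sensitive PPM is inactive, and the equilibrium identity reads $c^\pm=\frac{1}{m^\pm\sqrt{n^\pm}}$, where $m^\pm:=m(c^\pm)$ and $n^\pm:=n(c^\pm)=Ns^\pm$ with $s^\pm:=s_A(c^\pm)+s_B(c^\pm)$.

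Next, exactly as in the earlier proof, I would approximate $V_A-n^+/2$ (resp.\ $V_B-n^-/2$) by a Normal with mean $\tfrac12 m^\pm n^\pm$ and variance $\tfrac14(1-(m^\pm)^2)n^\pm$, giving $\mathbb{WP}_A(I,N,c^+)\cong\Phi(\mu^+/\sigma^+)$ and $\mathbb{WP}_B(I,N,c^-)\cong\Phi(\mu^-/\sigma^-)$, and then substitute $m^\pm\sqrt{n^\pm}=1/c^\pm$ to obtain
\[
\frac{\mu^\pm}{\sigma^\pm}=\frac{m^\pm\sqrt{n^\pm}}{\sqrt{1-(m^\pm)^2}}=\frac{1}{c^\pm\sqrt{1-(m^\pm)^2}}.
\]
Since $\Phi$ is strictly increasing, the statement reduces to the single inequality $c^-\sqrt{1-(m^-)^2}<c^+\sqrt{1-(m^+)^2}$.

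The core of the argument is then to check that both factors cooperate. The factor $c^-<c^+$ is immediate from $c^-<\hat c<c^+$. For the other factor I would use $m^\pm=\frac{1}{c^\pm\sqrt{Ns^\pm}}$: because $s_A$ is strictly increasing, $s_B$ non-decreasing, and $c^-<c^+$, we get $s^-<s^+$, hence $c^-\sqrt{Ns^-}<c^+\sqrt{Ns^+}$ and therefore $m^->m^+$, i.e.\ $\sqrt{1-(m^-)^2}<\sqrt{1-(m^+)^2}$. Multiplying the two positive inequalities yields $c^-\sqrt{1-(m^-)^2}<c^+\sqrt{1-(m^+)^2}$, and monotonicity of $\Phi$ finishes it. Conceptually: moving from $c^-$ to $c^+$, the cost threshold \emph{and} the relative variance both change in the direction that lowers the signal-to-noise ratio, so the locally popular candidate at $c^-$ strictly dominates.

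The main obstacle is the rigor of the ``$\cong$'': to get the strict inequality for \emph{every} finite $N$ one must weigh the $O(1/\sqrt N)$ Berry--Esseen error of the two Normal approximations against the gap $\mu^-/\sigma^--\mu^+/\sigma^+$, which itself vanishes as $N\to\infty$ (both ratios converge to $1/\hat c$). I expect this to be the only delicate point; it can be handled by a two-sided Binomial-tail estimate, or by a direct monotone coupling of the two Binomial random walks exploiting $(m^-)^2n^->(m^+)^2n^+$ and $m^->m^+$ simultaneously, with the finitely many small $N$ (including those with $c^+=1$, where $\mathbb{WP}_A(c^+)$ is simply the full-participation probability) checked separately. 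At the level of rigor used elsewhere in the paper, the displayed computation above is the proof.
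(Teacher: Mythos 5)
Your proposal is correct, and it is in fact \emph{more} complete than the paper's own proof. The paper argues only up to the intermediate fact $m(c^-)>m(c^+)$: it uses the two equilibrium identities $p(n(c^\pm),m(c^\pm))=c^\pm$ together with $n(c^-)<n(c^+)$ and monotonicity of $p$ in both arguments to get $p(n(c^+),m(c^+))>p(n(c^+),m(c^-))$, hence $m(c^+)<m(c^-)$ --- and then stops, leaving the passage from ``larger margin'' to ``larger winning probability'' implicit (which is not automatic, since $n(c^-)<n(c^+)$ pushes the other way). You supply exactly the missing step: at equilibrium $m\sqrt{n}=1/c$, so the signal-to-noise ratio is $\mu/\sigma=1/(c\sqrt{1-m^2})$, and both factors ($c^-<c^+$ and $m^->m^+$) move in the same direction, giving $\Phi(\mu^-/\sigma^-)>\Phi(\mu^+/\sigma^+)$. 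Your derivation of $m^->m^+$ goes through the explicit formula $m^\pm=1/(c^\pm\sqrt{Ns^\pm})$ and monotonicity of the linear supports, whereas the paper's derivation uses only monotonicity of $p$ in each argument; the paper's version is slightly more general (it would apply to any PPM decreasing in both arguments), but for the stated linear, tie-sensitive setting both are equally valid. Your closing remark about the Berry--Esseen error versus the vanishing gap $\mu^-/\sigma^--\mu^+/\sigma^+$ is a genuine rigor concern that the paper does not address at all (it works throughout at the level of the Normal approximation); flagging it, and noting that strictness for every finite $N$ would need a quantitative tail estimate or a coupling, is a real improvement rather than a gap in your argument.
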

\begin{proof}
    Since $c^-<c^+$, there are strictly less active voters (of both types) in $c^-$, thus $n(c^-)<n(c^+)$.

    Since both points are equilibrium point, we know that voters in $c^+$ participate iff their perceived pivotality is at least $c^+$. That is, 
    $$p(n(c^+),m(c^+))=c^+.$$

    Similarly, 
    $$p(n(c^-),m(c^-))=c^-.$$
    Now, $p$ is decreasing in both arguments (for any PPM), thus
    $$p(n(c^-),m(c^-)) > p(n(c^+),m(c^-)).$$
    Together, 
    $$p(n(c^+),m(c^+))=c^+>c^-=p(n(c^-),m(c^-)) > p(n(c^+),m(c^-)). $$
    Since $p$ is decreasing in $m$, this means $m(c^+)<m(c^-)$. 
\end{proof}


\begin{thebibliography}{}
\bibitem[Ald93]{Aldrich}
John~H. Aldrich.
\newblock Rational choice and turnout.
\newblock {\em American Journal of Political Science}, 37(1):246--278, 1993.

\bibitem[Bla00]{Blais}
André Blais.
\newblock {\em To Vote or Not to Vote: The Merits and Limits of Rational Choice Theory}.
\newblock University of Pittsburgh Press, 2000.

\bibitem[Dow57]{Downs57}
Anthony Downs.
\newblock An economic theory of political action in a democracy.
\newblock {\em Journal of Political Economy}, 65(2):135--150, 1957.

\bibitem[EGK07]{edlin2007voting}
Aaron Edlin, Andrew Gelman, and Noah Kaplan.
\newblock Voting as a rational choice: Why and how people vote to improve the well-being of others.
\newblock {\em Rationality and society}, 19(3):293--314, 2007.

\bibitem[FF75]{Ferejohn75}
John~A. Ferejohn and Morris~P. Fiorina.
\newblock Closeness counts only in horseshoes and dancing.
\newblock {\em The American Political Science Review}, 69(3):920--925, 1975.

\bibitem[FLMG19]{fairstein2019modeling}
Roy Fairstein, Adam Lauz, Reshef Meir, and Kobi Gal.
\newblock Modeling people's voting behavior with poll information.
\newblock In {\em Proceedings of the 18th International Conference on Autonomous Agents and MultiAgent Systems}, pages 1422--1430, 2019.

\bibitem[FP96]{fedderson12}
{Timothy J.} Feddersen and Wolfgang Pesendorfer.
\newblock The swing voter's curse.
\newblock {\em American Economic Review}, 86(3):408--424, 1996.

\bibitem[KS85]{katz1985network}
Michael~L Katz and Carl Shapiro.
\newblock Network externalities, competition, and compatibility.
\newblock {\em The American economic review}, 75(3):424--440, 1985.

\bibitem[Mei18]{2018Meir}
Reshef Meir.
\newblock {\em Strategic Voting}.
\newblock Synthesis Lectures on Artificial Intelligence and Machine Learning. Morgan {\&} Claypool Publishers, 2018.

\bibitem[Mer81]{Merrill1981}
Samuel Merrill.
\newblock Strategic decisions under one-stage multi-candidate voting systems.
\newblock {\em Public Choice}, 36(1):115--134, 1981.

\bibitem[MGT20]{meir2020strategic}
Reshef Meir, Kobi Gal, and Maor Tal.
\newblock Strategic voting in the lab: compromise and leader bias behavior.
\newblock {\em Autonomous Agents and Multi-Agent Systems}, 34(1):1--37, 2020.

\bibitem[MLR14]{Meir14}
Reshef Meir, Omer Lev, and Jeffrey~S. Rosenschein.
\newblock A local-dominance theory of voting equilibria.
\newblock In {\em Proceedings of the Fifteenth ACM Conference on Economics and Computation}, EC '14, page 313–330, New York, NY, USA, 2014. Association for Computing Machinery.

\bibitem[MW93]{MyersonWeber}
Roger~B. Myerson and Robert~J. Weber.
\newblock A theory of voting equilibria.
\newblock {\em The American Political Science Review}, 87(1):102--114, 1993.

\bibitem[Mye98]{Myerson1998}
Roger~B. Myerson.
\newblock Population uncertainty and poisson games.
\newblock {\em International Journal of Game Theory}, 27(3):375--392, 1998.

\bibitem[Mye02]{MYERSON2002219}
Roger~B Myerson.
\newblock Comparison of scoring rules in poisson voting games.
\newblock {\em Journal of Economic Theory}, 103(1):219--251, 2002.

\bibitem[OG84]{Owen1984}
Guillermo Owen and Bernard Grofman.
\newblock To vote or not to vote: The paradox of nonvoting.
\newblock {\em Public Choice}, 42(3):311--325, 1984.

\bibitem[OR03]{OSBORNE2003434}
Martin~J. Osborne and Ariel Rubinstein.
\newblock Sampling equilibrium, with an application to strategic voting.
\newblock {\em Games and Economic Behavior}, 45(2):434--441, 2003.
\newblock Special Issue in Honor of Robert W. Rosenthal.

\bibitem[RO68]{riker68}
William~H. Riker and Peter~C. Ordeshook.
\newblock A theory of the calculus of voting.
\newblock {\em The American Political Science Review}, 62(1):25--42, 1968.

\bibitem[Wik22]{wiki:Voter_turnout}
Wikipedia.
\newblock {Voter turnout in United States presidential elections} --- {W}ikipedia{,} the free encyclopedia.
\newblock \url{http://en.wikipedia.org/w/index.php?title=Voter\%20turnout\%20in\%20United\%20States\%20presidential\%20elections&oldid=1116061047}, 2022.
\newblock [Online; accessed 28-October-2022].

\end{thebibliography}
\end{document}